\PassOptionsToPackage{unicode}{hyperref}
\PassOptionsToPackage{hyphens}{url}
\PassOptionsToPackage{dvipsnames,svgnames,x11names}{xcolor}
\documentclass[
  12pt]{article}

\usepackage{amsmath,amssymb}
\usepackage{iftex}
\ifPDFTeX
  \usepackage[T1]{fontenc}
  \usepackage[utf8]{inputenc}
  \usepackage{textcomp} 
\else 
  \usepackage{unicode-math}
  \defaultfontfeatures{Scale=MatchLowercase}
  \defaultfontfeatures[\rmfamily]{Ligatures=TeX,Scale=1}
\fi
\usepackage{lmodern}
\ifPDFTeX\else  
\fi
\IfFileExists{upquote.sty}{\usepackage{upquote}}{}
\IfFileExists{microtype.sty}{
  \usepackage[]{microtype}
  \UseMicrotypeSet[protrusion]{basicmath} 
}{}
\makeatletter
\@ifundefined{KOMAClassName}{
  \IfFileExists{parskip.sty}{%
    \usepackage{parskip}
  }{
    \setlength{\parindent}{0pt}
    \setlength{\parskip}{6pt plus 2pt minus 1pt}}
}{
  \KOMAoptions{parskip=half}}
\makeatother
\usepackage{xcolor}
\makeatletter
\ifx\paragraph\undefined\else
  \let\oldparagraph\paragraph
  \renewcommand{\paragraph}{
    \@ifstar
      \xxxParagraphStar
      \xxxParagraphNoStar
  }
  \newcommand{\xxxParagraphStar}[1]{\oldparagraph*{#1}\mbox{}}
  \newcommand{\xxxParagraphNoStar}[1]{\oldparagraph{#1}\mbox{}}
\fi
\ifx\subparagraph\undefined\else
  \let\oldsubparagraph\subparagraph
  \renewcommand{\subparagraph}{
    \@ifstar
      \xxxSubParagraphStar
      \xxxSubParagraphNoStar
  }
  \newcommand{\xxxSubParagraphStar}[1]{\oldsubparagraph*{#1}\mbox{}}
  \newcommand{\xxxSubParagraphNoStar}[1]{\oldsubparagraph{#1}\mbox{}}
\fi
\makeatother

\providecommand{\tightlist}{%
  \setlength{\itemsep}{0pt}\setlength{\parskip}{0pt}}\usepackage{longtable,booktabs,array}
\usepackage{calc} 
\usepackage{etoolbox}
\makeatletter
\patchcmd\longtable{\par}{\if@noskipsec\mbox{}\fi\par}{}{}
\makeatother
\IfFileExists{footnotehyper.sty}{\usepackage{footnotehyper}}{\usepackage{footnote}}
\makesavenoteenv{longtable}
\usepackage{graphicx}
\makeatletter
\def\maxwidth{\ifdim\Gin@nat@width>\linewidth\linewidth\else\Gin@nat@width\fi}
\def\maxheight{\ifdim\Gin@nat@height>\textheight\textheight\else\Gin@nat@height\fi}
\makeatother
\setkeys{Gin}{width=\maxwidth,height=\maxheight,keepaspectratio}
\makeatletter
\def\fps@figure{htbp}
\makeatother

\makeatletter
\@ifpackageloaded{caption}{}{\usepackage{caption}}
\AtBeginDocument{%
\ifdefined\contentsname
  \renewcommand*\contentsname{Table of contents}
\else
  \newcommand\contentsname{Table of contents}
\fi
\ifdefined\listfigurename
  \renewcommand*\listfigurename{List of Figures}
\else
  \newcommand\listfigurename{List of Figures}
\fi
\ifdefined\listtablename
  \renewcommand*\listtablename{List of Tables}
\else
  \newcommand\listtablename{List of Tables}
\fi
\ifdefined\figurename
  \renewcommand*\figurename{Figure}
\else
  \newcommand\figurename{Figure}
\fi
\ifdefined\tablename
  \renewcommand*\tablename{Table}
\else
  \newcommand\tablename{Table}
\fi
}
\@ifpackageloaded{float}{}{\usepackage{float}}
\floatstyle{ruled}
\@ifundefined{c@chapter}{\newfloat{codelisting}{h}{lop}}{\newfloat{codelisting}{h}{lop}[chapter]}
\floatname{codelisting}{Listing}

\usepackage{amsthm}
\theoremstyle{plain}
\newtheorem{theorem}{Theorem}[section]
\theoremstyle{plain}
\newtheorem{proposition}{Proposition}[section]
\theoremstyle{definition}
\newtheorem{definition}{Definition}[section]
\theoremstyle{remark}
\AtBeginDocument{}

\makeatother
\makeatletter
\@ifpackageloaded{caption}{}{\usepackage{caption}}
\@ifpackageloaded{subcaption}{}{\usepackage{subcaption}}
\makeatother

\ifLuaTeX
  \usepackage{selnolig}  
\fi
\usepackage[]{natbib}
\usepackage{bookmark}

\IfFileExists{xurl.sty}{\usepackage{xurl}}{} 
\hypersetup{
  pdftitle={A Mathematical Framework and Software Implementation for Uncertainty Visualisation},
  pdfauthor={Harriet Mason; Dianne Cook; Sarah Goodwin; Susan VanderPlas},
  colorlinks=true,
  linkcolor={blue},
  filecolor={Maroon},
  citecolor={Blue},
  urlcolor={Blue},
  pdfcreator={LaTeX via pandoc}}

\begin{document}

\def\spacingset#1{\renewcommand{\baselinestretch}%
{#1}\small\normalsize} \spacingset{1}


\date{June 23, 2026}
\title{\bf A Mathematical Framework and Software Implementation for
Uncertainty Visualisation}
\author{
Harriet Mason\\
Department of Econometrics and Business Statistics, Monash University\\
and\\Dianne Cook\\
Department of Econometrics and Business Statistics, Monash University\\
and\\Sarah Goodwin\\
Department of Human Centred Computing, Monash University\\
and\\Susan VanderPlas\\
}
\maketitle

\bigskip
\bigskip
\begin{abstract}
Random variables are the bread and butter of statistics, and
visualisations are one of the most versatile tools in the field, so it
is a wonder why we do not have a methodology for visualising random
variables. This gap is particularly evident for exploratory data
analysis (EDA). We address this gap by designing a mathematical
framework for visualisation, which argues that we should consider
visualisations to be continuous functions. In the case of random
variable inputs, this means the visualisations should obey the
continuous mapping theorem (CMT). By breaking the visual function down
into its components, we are able to identify which parts of the mapping
are ill-defined for random variable inputs and redefine them in a way
that guarantees both the flexibility required for EDA and the
statistical sensibility of CMT. This formalisation represents a complete
integration of uncertainty into the grammar of graphics, which we show
by implementing the theory in the R package, ``ggdibbler''. The
ggdibbler software is a ``ggplot2'' extension that allows users to
replace the data of any plotting function with a random variable, with
the guarantee that the visualisation will have the same convergence
properties as its underlying data.
\end{abstract}

\newpage
\spacingset{1.9} 

\section{Introduction}\label{introduction}

\begin{quote}
``Elegant design requires us to think about a theory of graphics, not
charts.'' \citet{Leland2005}
\end{quote}

Uncertainty visualisation has suffered from a serious lack of
formalisation in recent years, which has resulted in an overwhelming
tsunami of named plots, touching every corner of the literature. No
visualisation problem is too niche as we see visualisations of
2D-intervals called a cross plot, rectangle plot, segment plot, and
dandelion plot \citep{Zhang2022}. No plot change is too small with
simulated outcomes on lines called spaghetti plots (or lasagne plots for
gradients) \citep{Swihart2010}, on maps called pixel maps
\citep{Vizumap}, on bar charts called fuzzygrams \citep{Kay2023}, and
when the outcomes are animated, we call them hypothetical outcome plots
(HOPs) \citep{Hullman2015}. No dead horse is too beaten with probability
functions having more names than ways to differentiate them with terms
like histogram, density plot, violin plot, ridgeline plot, rain plot,
dot plot, and gradient plot, only scratching the surface
\citep{Kay2023}. The practice of naming plots turns the field into a
race, where authors are incentivised to brand their name on as many
graphics as possible, rather than work towards a cohesive theory of
visualisation. This problem is as pervasive as it is frustrating.

According to \citet{Leland2005}, ``the computer woefully focuses the
mind'', so we might assume that these hyper-specific naming conventions
do not extend to the landscape of uncertainty visualisation software.
This assumption would be wrong. Even just within the R ecosystem of
\texttt{ggplot2} extensions, there is an overwhelming number of choices,
leaving users unclear when each package should be used and for what
purposes. A density plot can be made using \texttt{ggplot2},
\texttt{ggbeeswarm}, \texttt{ggridges}, \texttt{ggrain},
\texttt{ggdist}, \texttt{ggpointdensity}, and \texttt{ggdensity}, all
packages with similar descriptions and overlapping designs. This is not
to say these packages are pointless: they have thousands of downloads
per week and were motivated by a gap in the literature. Rather, this
example illustrates the havoc that named plots contribute to our
software ecosystem.

With this wealth of choices in plots and software, it would be
reasonable to assume that the field is meeting the needs of its users,
albeit in a messy, roundabout way, but this assumption would be
incorrect. Despite this wealth of choice, each package is brittle and
constrained by its own set of assumptions. There is a cry for
flexibility among the users of uncertainty visualisation software,
dating back more than 30 years \citep{MacEachren1992}. A useful
uncertainty visualisation system should allow for exploration with
uncertain data from multiple different sources, dimensions, and data
types \citep{Hadjimichael2024, geointerviews, MacEachren2005}, a
flexibility that all current implementations fail to provide.

This implies the real gap in uncertainty visualisation is not new plots
or bespoke software, but structure, and the flexibility that comes with
it. Structure is not so alien to the visualisation community that this
is an unreasonable request. Standard statistical graphics have been
formalised within \textbf{the grammar of graphics}
\citep{Leland2005, ggplot2} for several decades. This structure has
largely been ignored by the uncertainty visualisation community, with
the exception of \citet{Kay2023} and the \texttt{ggdist} R package.
Along with this formalisation of visualisations of univariate
probability functions, Kay expressed hope for a single coherent
uncertainty visualisation framework. This single unified framework, and
its implementation in the \texttt{ggdibbler} package, is what will be
discussed in the rest of this paper.

\section{A motivating example}\label{a-motivating-example}

We are going to build our uncertainty visualisation system upon a simple
assumption based on the existing literature: the input for an
uncertainty visualisation is a data set where every cell is a random
variable \citep{Kay2023} which contains all the quantified uncertainty
we wish to represent \citep{Mason2026}. Once we have our distribution
inputs, there is only one question left for our uncertainty
visualisation system to solve. What exactly should it \emph{do} with
these distributions?

Consider the case of a vector, \(X\), that contains the heights of 15
different women, where each \(x_i \in \mathbb{R}\) represents one
measured height. If we feed this vector into a density plot function,
such as \texttt{geom\_density()} from \texttt{ggplot2}, how should it
behave? Hopefully, this case is obvious, and the output of this function
is shown in the \texttt{ggplot2} example in
Figure~\ref{fig-dist-example}. A density curve of our data is estimated
and displayed using a line geometry. This is a straightforward example,
but what happens when our input is a distribution? This can occur if,
for example, the tool used to record the heights has an inherent
measurement error. Now, we have a vector, \(\textbf{X}\), of 15
distributions, where each \(\textbf{x}_i \sim N(x_i, \sigma_i)\), where
\(x_i \in X\) is the recorded value, and \(\sigma_i\) is an estimated
variance based on the environmental conditions of the measurement. There
are two routes we can take when it comes to visualising this estimate.
The second plot in Figure~\ref{fig-dist-example} is made using
\texttt{ggdist}, and it shows the case where we are interested in the
distribution of each individual measurement. The third plot is made
using \texttt{ggdibbler}, which places the emphasis on the full data
density \emph{but} carries forward the variability in the density that
comes with having distributional inputs.

\begin{figure}[t]

\centering{

\includegraphics[width=1\textwidth,height=\textheight]{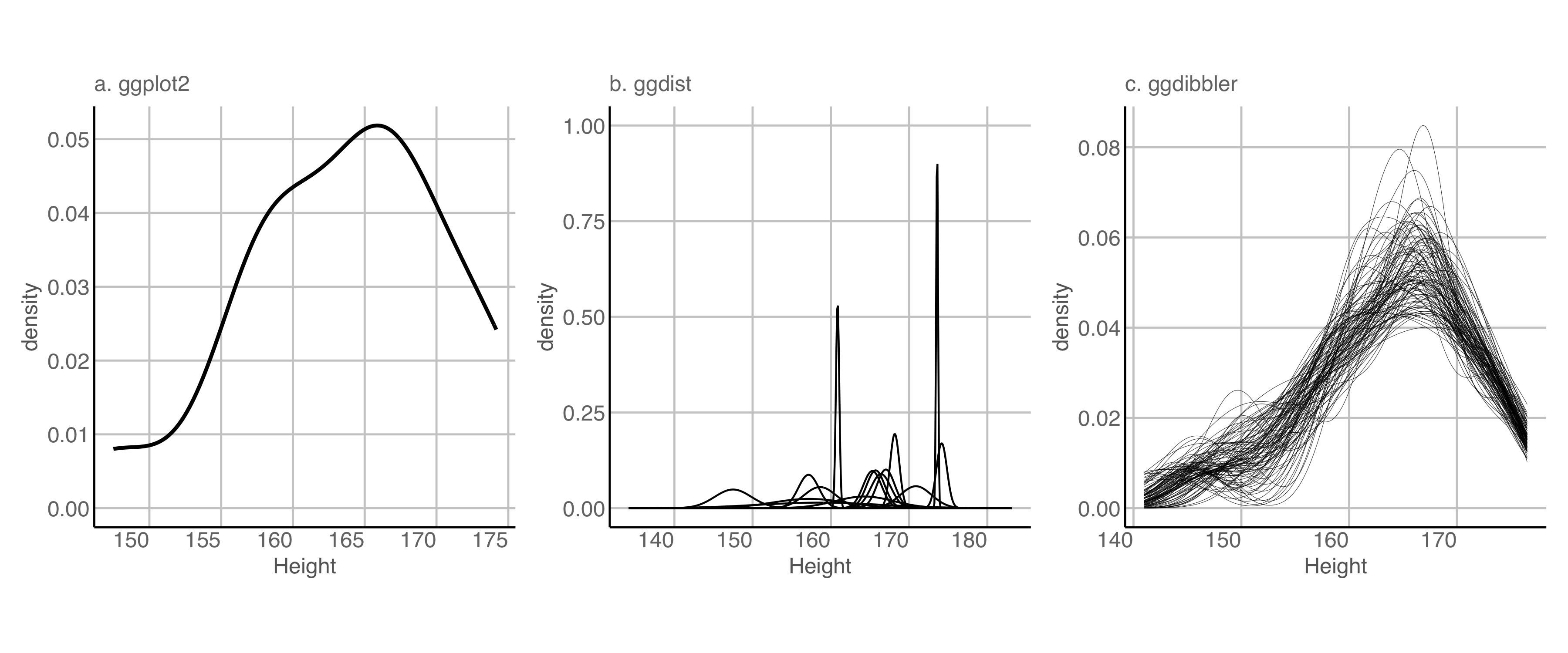}

}

\caption{\label{fig-dist-example}Alternative interpretations of how to
render a density plot when the input is a set of distributions
describing uncertainty of measurements, according to three plotting
packages: (a) \texttt{ggplot2} forms the density from the mean values,
(b) \texttt{ggdist} puts the distribution on each observation, treating
uncertainty as signal, (c) \texttt{ggdibbler} shows the densities for
multiple samples, which puts the focus on how the density might look
given the uncertainty. These differences illustrate how uncertainty is
interpreted in different ways. Which is correct?}

\end{figure}%

The distinction between the two approaches presented in
Figure~\ref{fig-dist-example} is the same signal and noise paradigm
presented by \citet{Mason2026}. In the \texttt{ggdist} plot, we are
interested in the shape of the distribution of each observation, so we
are visualising the uncertainty as a signal. In the \texttt{ggdibbler}
plot, we are not interested in the uncertainty in and of itself, but
rather, we only included it to see how it would change the conclusions
from the \texttt{ggplot2} plot, thus, visualising it as noise. Given
these two approaches, which plot is the ``correct'' visualisation
depends on the goals of our analysis and what we are looking to infer
from making the plot. However, this freedom to choose disappears if we
want to design a visualisation system for EDA. If we want to design a
visualisation system for EDA, the method must always work, which means
any \emph{existing} plot should always have an ``uncertain''
counterpart. Therefore, when designing a system for EDA, we are actually
interested in a slightly different question: ``Which plot is the correct
behaviour of \texttt{geom\_density} with the distributional input,
\(\textbf{X}\)?''. The answer to this question is definitively, the
\texttt{ggdibbler} plot. The reasoning is rooted in strong statistical
foundations; we just need to expand the concepts to visual statistics.

\section{Visual Statistics}\label{visual-statistics}

\subsection{The deterministic visual
function}\label{the-deterministic-visual-function}

When you really think about it, what is a visualisation? One of the most
definitive answers to this question can be found in the grammar of
graphics, which is a theoretical framework that characterises a
visualisation as a series of composite functions \citep{Leland2005}.
Under this formalisation, a visualisation is a function that takes a
deterministic matrix (Definition~\ref{def-deterministic}) as input and
outputs a statistical graphic. This formalisation is lengthy, so we
summarise the key aspects of it in Definition~\ref{def-vfunc}, and
Figure~\ref{fig-gog}, which shows each step of the visualisation process
with highlighted components indicating the sections we will need to
adjust for uncertainty visualisation. By leveraging the graphics
pipeline, we make the implicit explicit, which will allow us to compare
different uncertainty visualisation systems on a deeper and more
objective level \citep{Wickham2009}. Note that these steps summarise the
underlying architecture of the \texttt{ggplot2} \citep{ggplot2}
implementation of the grammar, which is an essential foundation for this
new software. However, the conceptual framework described here is not
dependent on any particular implementation of the grammar and could
easily be implemented in another grammar of graphics system, such as
Vega-Lite \citep{Satyanarayan2016}. The notation used for visual
statistics builds on that developed in \citet{Majumder2013} where the
term was first introduced in the context of conducting statistical
inference.

\begin{definition}[Deterministic
matrix]\protect\hypertarget{def-deterministic}{}\label{def-deterministic}

Let \(\textbf{A}\) be an \(m \times n\) matrix on the sample space
\(\Omega\). A is a deterministic matrix if \(a_{i,j} \forall i,j\) are
single-value (non-random) entries.

\end{definition}

\begin{definition}[The deterministic visual
function]\protect\hypertarget{def-vfunc}{}\label{def-vfunc}

Let \(X_{n_1,m_1}\) be a matrix of outcomes with \(n_1\) rows and
\(m_1\) columns on the sample space \(\Omega\). Let \(V\) be a function
that maps \(X\) from our sample space \(\Omega\) to the space of all
visual statistics, \(\Psi\). We refer to \(V\) as a visual function, and
its output \(V(.)\) as a visual statistic. The visual function, \(V\),
can be decomposed into the following composite function:

\[
V = A \circ O \circ G \circ S \circ M
\] where:

\begin{itemize}
\tightlist
\item
  \(M = [M_1, M_2, ..., M_{m_1}]'\) is a set of \(m_1\) functions that
  each scale one column of our \(X_{n_1,m_1}\) matrix. This function
  maps the individual cells of \(X_{n_1,m_1}\), \(x_{i,j}\)
  \(\forall i = 1,...,n_1\) and \(j = 1,...,m_1\) from the sample space,
  \(\Omega\), to the space of real numbers, \(R\).
\item
  \(S\) is a statistic function that summarises our \(X_{n_1,m_1}\)
  matrix down to an \(X_{n_2,m_2}\) matrix. There are no strict
  requirements for the statistic function. The function is a
  transformation from \(R\) to \(R\).
\item
  \(G\) converts two (for a 2D graph) position columns
  \(X_{[.,k]},X_{[.,l]}\) to values that represent magnitudes in space
  returning an \(X_{n_2,m_2}\) matrix on the bounded plane \(B^2\). We
  can further decompose \(G\) into \(G = E\circ P\), where is the
  geometry function that maps each \(x_{i,j} \forall j=k,l\) from \(R\)
  to the bounded plane \(B^2\), and \(P\) is a position modifier
  function that checks for overlapping values in \(X_{n_2,m_2}\) and
  either stacks them on top of each other in the dependent axis, or
  dodges them next to each other on the independent axis.
\item
  \(O\) is a transformation of our coordinate system. The function is a
  transformation from \(B^2\) to \(B^2\).
\item
  \(A\) is an injective function that transforms our \(X_{n_2,m_2}\)
  matrix into physically observable stimuli. The function maps our data
  from \(B\) to the space of visual statistics \(\Psi\).
\end{itemize}

\end{definition}

\begin{figure}[t]

\centering{

\includegraphics{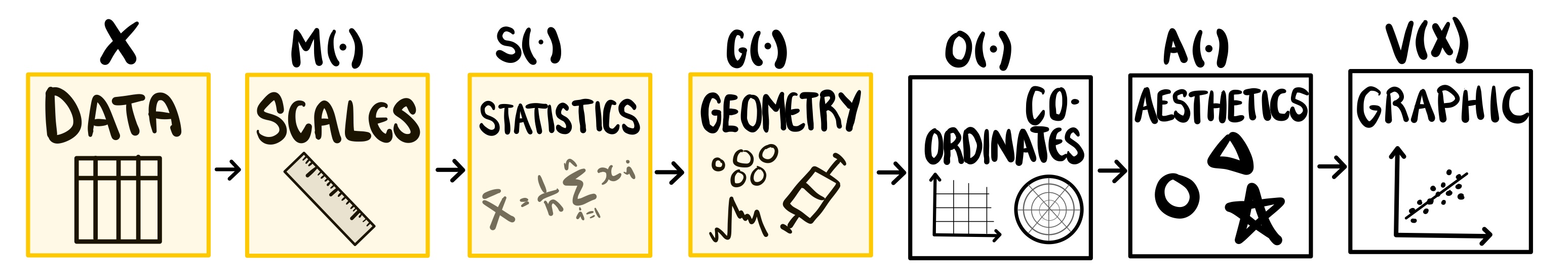}

}

\caption{\label{fig-gog}Illustration of the steps to render a plot, as
defined by the grammar of graphics. For uncertainty visualisation,
changes are needed for the highlighted steps: data, scales, statistics,
and the position adjustment within the geometry.}

\end{figure}%

\subsection{Random matrices and continuous mapping
theorem}\label{random-matrices-and-continuous-mapping-theorem}

Unlike a standard visualisation, an uncertainty visualisation has a
random matrix (Definition~\ref{def-random}) input, where distributions
replace the single values of our deterministic matrix
(Definition~\ref{def-deterministic}).

\begin{definition}[Random
matrix]\protect\hypertarget{def-random}{}\label{def-random}

Let \(\textbf{A}\) be an \(m \times n\) matrix-valued random variable on
the probability space \((\Omega, \mathcal{F}, Pr)\). This is a matrix
where some or all of its entries are random variables drawn from some
probability distribution.

\end{definition}

Combined, Definition~\ref{def-random} and Definition~\ref{def-vfunc}
boil uncertainty visualisation down to a simple problem. Our
visualisation system needs to be designed in such a way that it follows
existing mathematical principles of random variables and functions. More
specifically, our graphics should uphold the \emph{continuous mapping
theorem} \citep{mann_wald}, Theorem~\ref{thm-cmt}.

\begin{theorem}[Continuous mapping
theorem]\protect\hypertarget{thm-cmt}{}\label{thm-cmt}

Let \(\textbf{X}\) and \(\textbf{Y}\) be \(n \times m\) random matrices,
and let \(Z\) be an \(n \times m\) deterministic matrix. Let
\(f: E\xrightarrow{} E'\) be a continuous function from one metric
space, \(E\), to another \(E'\). Then:
\[\textbf{X} \xrightarrow{} \textbf{Y} \Rightarrow f(\textbf{X}) \xrightarrow{} f(\textbf{Y})\]

and

\[\textbf{X} \xrightarrow{} Z \Rightarrow f(\textbf{X}) \xrightarrow{} f(Z)\]

\end{theorem}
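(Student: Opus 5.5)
The arrow \(\xrightarrow{}\) in Theorem~\ref{thm-cmt} is not specified. I will prove it for the three standard modes: convergence in distribution, in probability, and almost surely. I read \(\textbf{X}\) as a sequence \(\textbf{X}_k\) of random matrices taking values in \(E\), the space of \(n\times m\) matrices with any metric inducing the usual topology. I assume \(f\) is measurable and continuous at every point of \(E\); if it is only continuous off a set \(D\), the same proof works under \(Pr(\textbf{Y}\in D)=0\). The second statement is the special case \(\textbf{Y}\equiv Z\), a degenerate random matrix. So it is enough to treat \(\textbf{X}_k \to \textbf{Y}\).

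\emph{Almost sure convergence.} Fix \(\omega\) in the probability-one event where \(\textbf{X}_k(\omega)\to\textbf{Y}(\omega)\). Continuity of \(f\) at \(\textbf{Y}(\omega)\) gives \(f(\textbf{X}_k(\omega))\to f(\textbf{Y}(\omega))\). Hence \(f(\textbf{X}_k)\to f(\textbf{Y})\) almost surely.

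\emph{Convergence in probability.} Fix \(\varepsilon>0\), and for \(\delta>0\) define
\[B_\delta=\{y\in E:\exists x,\ d(x,y)<\delta,\ d'(f(x),f(y))>\varepsilon\}.\]
Then
\[Pr\big(d'(f(\textbf{X}_k),f(\textbf{Y}))>\varepsilon\big)\le Pr(\textbf{Y}\in B_\delta)+Pr\big(d(\textbf{X}_k,\textbf{Y})\ge\delta\big).\]
The second term tends to \(0\) as \(k\to\infty\). The sets \(B_\delta\) decrease as \(\delta\downarrow 0\), and by continuity of \(f\) their intersection is empty. Continuity of probability therefore makes the first term as small as we like, so \(f(\textbf{X}_k)\to f(\textbf{Y})\) in probability. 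An alternative is the subsequence characterisation: every subsequence has a further subsequence converging almost surely, and the almost-sure case applies to it.

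\emph{Convergence in distribution.} I would use the Portmanteau theorem, which says \(\textbf{X}_k \Rightarrow \textbf{Y}\) if and only if \(\limsup_k Pr(\textbf{X}_k\in F)\le Pr(\textbf{Y}\in F)\) for every closed \(F\). Take \(F\subseteq E'\) closed. Continuity of \(f\) makes \(f^{-1}(F)\) closed in \(E\), so
\[\limsup_k Pr\big(f(\textbf{X}_k)\in F\big)=\limsup_k Pr\big(\textbf{X}_k\in f^{-1}(F)\big)\le Pr\big(\textbf{Y}\in f^{-1}(F)\big)=Pr\big(f(\textbf{Y})\in F\big).\]
Applying Portmanteau in \(E'\) gives \(f(\textbf{X}_k)\Rightarrow f(\textbf{Y})\). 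An even quicker route, when \(f\) is continuous everywhere, uses bounded continuous test functions: for bounded continuous \(g\), the composite \(g\circ f\) is bounded continuous, so \(E[g(f(\textbf{X}_k))]\to E[g(f(\textbf{Y}))]\).

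The main obstacle is the case where \(f\) has a discontinuity set \(D\), which matters for plots with binning or other thresholds. There the closed-set argument fails because \(f^{-1}(F)\) need not be closed. Instead I would use \(\overline{f^{-1}(F)}\subseteq f^{-1}(F)\cup D\) together with \(Pr(\textbf{Y}\in D)=0\). This is the step that needs care; the rest is routine.
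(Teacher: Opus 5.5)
The paper does not actually prove Theorem~\ref{thm-cmt}. It quotes the result from \citet{mann_wald} and never says which mode of convergence the arrow denotes, so on the paper's side the only ``route'' is the citation. Your proposal supplies the standard textbook argument, and it is correct in all three modes. Almost-sure convergence follows from pointwise continuity. Convergence in probability follows from the \(B_\delta\) decomposition; with \(f\) continuous everywhere each \(B_\delta\) is open, so continuity of \(Pr\) from above applies with no measurability worry. Convergence in distribution follows from the closed-set Portmanteau criterion, or from composing with bounded continuous test functions. Reducing the deterministic-limit statement to \(\textbf{Y}\equiv Z\) is fine. The inclusion \(\overline{f^{-1}(F)}\subseteq f^{-1}(F)\cup D\) is exactly what handles a discontinuity set. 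What your version buys over the bare citation is the precision the paper's statement lacks: an explicit sequence index, an explicit mode of convergence, and explicit continuity and measurability hypotheses. Your argument also uses nothing specific to matrices, so it holds for any separable metric space \(E\). Your extension to \(Pr(\textbf{Y}\in D)=0\) is in fact the form the paper's application needs, since binned statistics and overlap-detecting position adjustments are not globally continuous. Neither the citation nor your proof connects the theorem to Definition~\ref{def-converge}, which defines visual convergence as the equality \(g(V(\textbf{X}),V(\textbf{Y}))=0\) rather than as a limit. Applying the theorem to the paper's visual statistics therefore first requires reading that definition as one of the standard modes you treat.
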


In simple terms, Definition~\ref{def-vfunc} and Theorem~\ref{thm-cmt}
mean that our uncertainty visualisations should have similar convergence
properties to their underlying distributions.

If we accept the idea that a visualisation is a continuous function,
adhering to Theorem~\ref{thm-cmt} is not a nice property or an opinion
on how visualisations should behave, but rather a \emph{fundamental
requirement} of any visualisation of a random matrix. Or, at least it
will be after we establish that the assumptions of Theorem~\ref{thm-cmt}
are true. This will require us to show that \(\Omega\) and \(\Psi\) are
metric spaces, and then use those metric spaces to define convergence
for visual functions.

\subsubsection{Visual metric spaces}\label{visual-metric-spaces}

The idea that both \(\Omega\) and \(\Psi\) are metric spaces is not
particularly strange: one of the core tenets of statistical graphics is
that they maintain the link between data and visual aesthetic
\citep{Leland2005}. This point was made rather comically by
\citet{Bartonicek2025}, who drew two rectangles stacked on top of each
other, and pointed out that it was not a stacked bar chart. The idea
that there is some kind of structure, or ordering that we need to
maintain, implies that \(\Omega\) is a metric space. In most cases,
\(\Omega \subseteq \mathbb{R}^p\) and it immediately follows that the
ordered pair (\(\Omega\), \(d\)) is a metric space, where \(d\) is
Euclidean distance. In the cases where
\(\Omega \not\subseteq \mathbb{R}^p\), the first step of the analysis is
to scale the data to \(\mathbb{R}^p\) using \(M\), so we can say that
(\(\Omega\), \(d \circ M\)) is a metric space. For our visual space,
\(\Psi\), we can actually do exactly the same thing, but in reverse.
Since our aesthetic function is defined as \(A: B\xrightarrow{} \Psi\),
and \(B \subseteq \mathbb{R}^p\), we can set our metric to be the
inverse of our aesthetic function, such that the ordered pair (\(\Psi\),
\(d \circ A^{-1}\)) is a metric space.

Ideally, as visualisations are designed to be viewed by humans, we would
have set our distance on \(\Psi\) to be the human ability to visually
differentiate two plots, \(h\). This is similar to the notion of
distance that is leveraged by the lineup protocol \citep{Buja2009}. In
the lineup protocol, viewers are shown \(M-1\) null plots,
\(V(X_{N_i})\), where \(X_{N_i}\) for \(i=1, ..., M-1\) are independent
draws generated from some null distribution, and a visual test
statistic, \(V(X_T)\), where \(X_T\) is our actual data. If the test
statistic, \(V(X_T)\), is significantly different, visually, from
\(V(X_{N_i}) \forall i\), then viewers will be able to pick it out of
the lineup, and we would reject our null hypothesis that \(X_T\) was
generated from the same distribution as \(X_{N_i}\). This test actually
measures Mahalanobis distance on \(\Psi\), as Mahalanobis distance is a
multivariate generalisation of a Z score, and the lineup protocol is the
visual equivalent of a hypothesis test. Even though human perception is
the natural metric for statistical graphics, (\(\Psi\), \(h\)) it is not
a metric space. This is because it violates the triangular inequality
due to the existence of ``just noticeable differences'' (JND)
\citep{Luce1958}. We can show that human perception is not a metric
space with a quick proof.

\begin{proposition}[]\protect\hypertarget{prp-metric}{}\label{prp-metric}

Let \(h: \Psi \xrightarrow{} \mathbb{R}\) be a piecewise distance
function that measures the human perception of statistical graphics,
defined as: \[
h(x, y) = \begin{cases} 
      g(x,y) & g(x,y) \geq \epsilon \\
      0 &  g(x,y) < \epsilon
   \end{cases}
\] where \(\epsilon\) is the JND of our human observer (with
\(\epsilon>0\)), and \(g(x,y) = d(A^{-1}(x), A^{-1}(y))\) is the
Euclidean distance in the rendered graphics; \(h\) is not a metric.

\end{proposition}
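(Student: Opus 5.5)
To show that \(h\) is not a metric, I will construct an explicit counterexample to the triangle inequality, since the thresholding at \(\epsilon\) is exactly what breaks it. The other metric axioms are worth a brief look first. Symmetry and non-negativity are inherited from \(g\). The identity of indiscernibles also fails, because \(h(x,y)=0\) whenever \(0<g(x,y)<\epsilon\), so distinct plots can have zero distance. This alone suffices, but I will lead with the triangle inequality because that is the failure the paper attributes to just noticeable differences.

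\textbf{Reduce to the Euclidean plane.} Since \(A\) is injective (Definition~\ref{def-vfunc}), \(g(x,y)=d(A^{-1}(x),A^{-1}(y))\) is a genuine metric on \(\Psi\) (it is a pullback of Euclidean distance). It therefore suffices to choose points \(a,b,c \in B\) and set \(x=A(a)\), \(y=A(b)\), \(z=A(c)\). I will assume \(B\) contains a segment of length at least \(\epsilon\), which is reasonable since it is a nondegenerate bounded region of the plane. I will also assume that points of \(B\) correspond to elements of \(\Psi\) through \(A\).

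\textbf{Collinear construction.} Take \(a,b,c\) collinear with \(b\) the midpoint of \(a\) and \(c\), and \(d(a,c)=\epsilon\). Then \(d(a,b)=d(b,c)=\epsilon/2<\epsilon\). This gives \(h(x,y)=h(y,z)=0\), while \(g(x,z)=\epsilon\ge\epsilon\) gives \(h(x,z)=\epsilon>0\). Hence
\[
h(x,z)=\epsilon > 0 = h(x,y)+h(y,z),
\]
which violates the triangle inequality, so \(h\) is not a metric.

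\textbf{Main obstacle.} The main difficulty is not the inequality itself but the tacit requirement that the construction takes place in the space where \(h\) lives. The three points must be realisable as distinct rendered graphics, which requires \(A\) to be defined on enough of \(B\). I will state this assumption explicitly, noting that it is guaranteed by \(B\) being a region of \(\mathbb{R}^2\) of diameter at least \(\epsilon\).
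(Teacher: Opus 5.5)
Your proof is correct and is essentially the paper's own argument: the paper also takes $b$ to be the midpoint of $a$ and $c$ with $g(a,c)=\epsilon$, so that $h(a,b)=h(b,c)=0$ while $h(a,c)=\epsilon$, which contradicts the triangle inequality. Your extra remarks are sound additions: the identity of indiscernibles also fails, and you state explicitly that the three plots must be realisable through $A$, which the paper simply assumes. One small correction: a region of diameter at least $\epsilon$ need not contain a straight segment of length $\epsilon$. You do not need one, though, since any $b$ with $g(a,b),g(b,c)<\epsilon\le g(a,c)$ works, and any path-connected $B$ of diameter exceeding $\epsilon$ provides such a $b$.
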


\begin{proof}
Let \(a,b,c \in \Psi\) be three different visualisations, where \(a\)
and \(c\) are plots in the metric space, and \(b\) is exactly halfway
between them, such that \(g(a, c) =  \epsilon\),
\(g(a,b) = \frac {1}{2} \epsilon\), and
\(g(b, c) = \frac {1}{2} \epsilon\). Therefore, \(h(a, c) =  \epsilon\),
\(h(a,b) = 0\), and \(h(b, c) = 0\). If we assume \(h\) is a metric
space, then the triangular inequality will hold, and we can state
\[h(a, c) \leq h(a,b) + h(b, c)\] which implies \[0 \leq \epsilon\]
Therefore, by contradiction, \(h\) is not a metric.
\end{proof}

As long as our graphics system does not let \(A\) map to increments
\(< \epsilon\), this should not be a problem, but due to differences in
human perception, \(\epsilon\) is not constant for the entire human
population, and this system would be impossible to implement. This means
that we might not be able to visually distinguish every plot that is
different in \(\Psi\), but if we are able to \emph{see} a difference in
two plots, it is definitely there.

\subsubsection{Visual convergence}\label{visual-convergence}

With the metric space out of the way, we need to define the concept of
convergence for visual functions. Thankfully, this is also covered by
our definition of a metric space. Two plots have converged if their
renderings are identical.

\begin{definition}[Visual
convergence]\protect\hypertarget{def-converge}{}\label{def-converge}

Let \(\textbf{X}\) and \(\textbf{Y}\), be \(n \times m\) random
matrices, and let \(Z\) be an \(n \times m\) deterministic matrix. Let
\(V\) be a visual function \(V\): \(\Omega\xrightarrow{} \Psi\). Let
\(g\): \(\Psi\xrightarrow{} R\) where \(g = d \circ A^{-1}\), and
\((\Psi, g)\) is a metric space. We say two random graphics have
visually converged, that is,
\(V(\textbf{X}) \xrightarrow{} V(\textbf{Y})\) when
\(g(V(\textbf{X}), V(\textbf{Y})) = 0\). We say that a random graphic
has converged to a deterministic graphic, that is,
\(V(\textbf{X}) \xrightarrow{} V(X)\) when
\(g(V(\textbf{X}), V(X)) = 0\).

\end{definition}

We will usually approximate this convergence by using visual
distinguishability, similar to the approach taken by the lineup
protocol.

\subsection{Returning to the density plot
example}\label{returning-to-the-density-plot-example}

In our density example, we defined \(\textbf{X}\) and \(X\) such that
\(\textbf{x}_i \xrightarrow{p} x_i, \forall i = 1,...,15\) as
\(var(\textbf{x}_i) \xrightarrow{} 0\). Therefore, as the variance of
our distributions approach zero, \(\textbf{X}\xrightarrow{p}X\) and we
should see \(V(\textbf{X})\xrightarrow{p}V(X)\). That is, as the
variance of all the heights in \(\textbf{X}\) approaches zero, our
uncertainty visualisation should be visually indistinguishable from the
\texttt{ggplot2} plot of Figure~\ref{fig-dist-example}. Looking at the
plots, we would observe this behaviour in the \texttt{ggdibbler} plot,
but not the \texttt{ggdist} plot. This is why we assert that the
\texttt{ggdibbler} plot is the random matrix version of the
\texttt{geom\_density} function from \texttt{ggplot2}.

\section{Generalising the visual
function}\label{generalising-the-visual-function}

The visual function, as described in the \emph{grammar of graphics}, has
one key limitation; it assumes each data point is deterministic. That
is, Definition~\ref{def-vfunc} in its current form does not allow for
random matrix input. This is an issue for uncertainty visualisation, as
a random matrix input is one of the core assumptions of the approach. To
redefine our visual function such that it accepts random matrix inputs,
we will need to adjust the definition of our \textbf{scale},
\textbf{statistic}, and \textbf{geometry} to generalise
Definition~\ref{def-vfunc}. In doing so, we will also ensure we do not
make any changes that will result in a violation to
Theorem~\ref{thm-cmt}. When we finish, we should have a generalised
visual function that will accept random matrix inputs for any graphic
created using the \emph{grammar of graphics}, with the additional
property that the plots always obey Theorem~\ref{thm-cmt}.

\subsection{\texorpdfstring{The adjustment to
\texttt{scales}}{The adjustment to scales}}\label{the-adjustment-to-scales}

Scales map our data to a set of real number outcomes; they determine how
we perceive the size, shape, and location of our data, and give our data
meaning \citep{Leland2005}. This sentiment is echoed in the construction
of our visual statistics, as the scale \(M\) is used to define the
distance metric for our metric space. To use the same scale, \(M\), that
was defined for our deterministic matrix \(X\) on our random matrix
\(\textbf{X}\), we will need to perform a ``change of variable''.
\citet{Kay2023} would refer to this as a ``scale aware'' requirement for
uncertainty visualisation systems. This means we are not changing the
function \(M\) itself, but we are just expanding the definition to allow
for random matrix input. This formalisation of this scale is summarised
in Definition~\ref{def-scale}, and Figure~\ref{fig-scale},

\begin{definition}[Generalised
scale]\protect\hypertarget{def-scale}{}\label{def-scale}

Let \(\Omega\) be a sample space and \(M\): \(\Omega\xrightarrow{} R\)
be a scale function that maps our data from \(\Omega\) to \(R\). Let
\(\textbf{X}\) be a random matrix on the probability space
\((\Omega, \mathcal{F}, P)\). Then the scale function \(M\), applied to
\(\textbf{X}\) will give us \(M(\textbf{X})\) with the induced
probability measure \(P_{M(X)}(A) = P(M^{-1}(A))\).

\end{definition}

\begin{figure}

\centering{

\includegraphics[width=0.8\textwidth,height=\textheight]{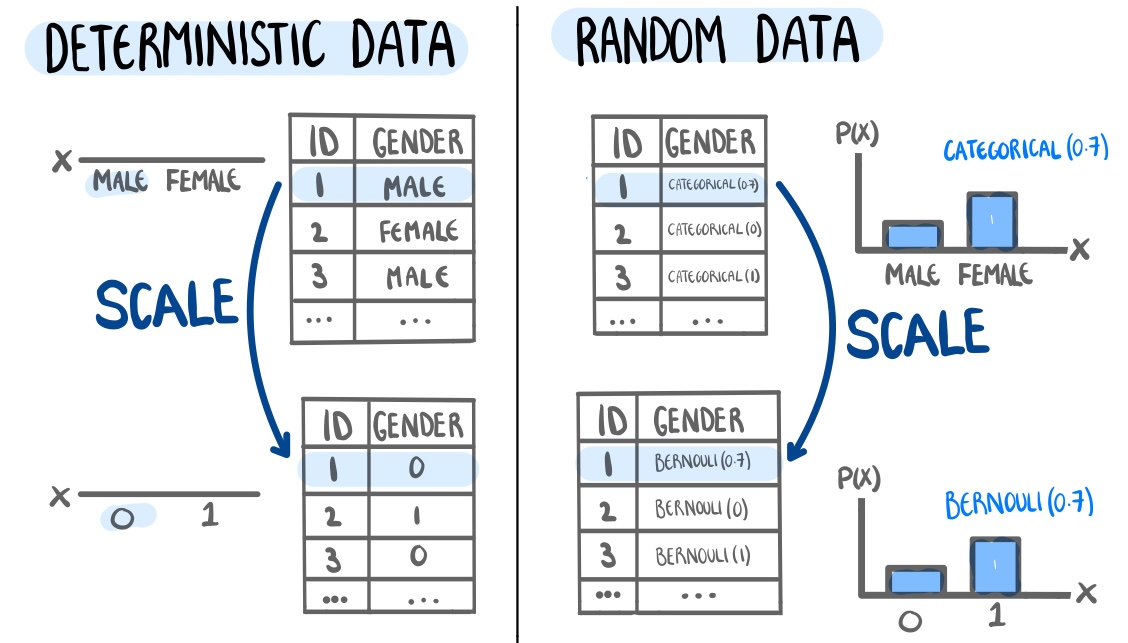}

}

\caption{\label{fig-scale}An illustration that highlights the difference
between scaling distributions and scaling deterministic variables.
Scaling deterministic variables only requires us to map individual
values, but scaling a distribution requires us to scale the
distribution's domain.}

\end{figure}%

The illustration in Figure~\ref{fig-scale} shows that our distribution
can also changes names (i.e.~we scale a binary categorical random
variable to \{0,1\} to create a Bernoulli distribution), but this is
simply a function of the scaling, and does not represent any meaningful
change in the shape of the distributions.

\subsection{\texorpdfstring{The adjustment to
\texttt{statistics}}{The adjustment to statistics}}\label{the-adjustment-to-statistics}

Statistics provide a summary of our data. Several graphics, such as box
plots or bar charts, are inherently linked to a statistic (the
five-number summary and summation, respectively). This is the current
role of our statistic function, \(S\), which is only well defined for
deterministic inputs. Unlike deterministic variables, random variables
are more abstract and cannot be expressed as a single concrete value.
Therefore, there is a second statistic that needs to be computed, which
is the statistic that represents the distribution. The distributional
statistic must be calculated at this stage, as the geometry assumes we
are working with concrete values that can be mapped to a position in
Euclidean space. Therefore, in uncertainty visualisation, there are two
statistics that must be defined: the statistic that represents the
distribution, and the statistic that summarises the data. The
generalised statistic function that accepts random variables is shown in
Definition~\ref{def-stat}.

\begin{definition}[Generalised
statistic]\protect\hypertarget{def-stat}{}\label{def-stat}

Let \(\textbf{X}_{n_1,m_1}\) be a random matrix on the probability space
\((R, \mathcal{F}, P)\). Let
\(S_{sample}: (R, \mathcal{F}, P)\xrightarrow{} R\) be a function that
transforms the random matrix \(\textbf{X}_{n_1,m_1}\) into the
deterministic \(X_{n_1,m_1, t}\) array, where \(t\) is the number of
samples drawn from \(\textbf{X}_{n_1,m_1}\). Let \(S\):
\(R\xrightarrow{} R\) be a function that transforms the deterministic
matrix, \(X_{n_1,m_1, t}\) into a statistical summary,
\(X_{n_2,m_2, t}\). Note that \(S\) covers all statistics that can be
implemented in the deterministic grammar of graphics. We define
\(S^*: (R, \mathcal{F}, P)\xrightarrow{} R\) to be the composite
function \(S = S_{sample} \circ S^*\) that transforms the random matrix
\(\textbf{X}_{n_1,m_1}\) into the deterministic matrix,
\(X_{n_2,m_2, t}\).

\end{definition}

The definition presented in Definition~\ref{def-stat} may leave you with
some questions, specifically, why we have limited the distribution
representation to a sample of outcomes. This is the question we will
spend the rest of this section answering.

\subsubsection{Representing a
distribution}\label{representing-a-distribution}

Distributions are abstract concepts, needing some kind of concrete
representation to visualise them. This is a common consideration in
uncertainty visualisation where we frequently see software that lets us
visualise our distribution as a sample of outcomes, a mean and variance,
a confidence interval, or even several of these statistics at once
\citep{Potter2010, Kay2023}. Authors often opt for flexibility in the
distribution level statistic, with very little consideration as to how
this might affect the rest of the plot. It is not unreasonable to assume
that the distribution statistic should be interchangeable, as this is
how the \emph{rest} of the grammar operates. Even \citet{Leland2005}
himself expressed this sentiment in the uncertainty visualisation
chapter of The Grammar of Graphics. What this sentiment fails to realise
is that the interchangeable nature of the grammar comes from its
formalisation, a formalisation that does not properly integrate
uncertainty. When we take the time to formalise uncertainty within the
visualisation function, we quickly see that the way we represent our
distribution does not have a neutral impact on the other components of
the grammar.

\subsubsection{Beyond point estimates}\label{beyond-point-estimates}

The first problem is obvious: not all distribution representations can
show the uncertainty in our random matrix. We briefly illustrate the
problem here. Figure~\ref{fig-meanprob} shows a set of bivariate
densities as raster plots visualising the \texttt{uncertain\_faithfuld}
data from \texttt{ggdibbler}, which is a random matrix variation of the
\texttt{faithfuld} data from the \texttt{ggplot2} package. Plot (a)
visualises only the estimate, while the other three plots represent the
data using a sample. As we move from plot (b) to plot (d), the variance
in our distribution increases. This variance is independent of our
expected value, so the increasing variance has no impact on the expected
value of the distribution. Unlike the sample visualisation, the
visualisation of our point estimate does not change as the variance
increases and is always represented by plot (a), which is equivalent to
always conveying a variance of zero. The example shows how
\texttt{ggdibbler} handles increasing variance in a density plot, as the
``graininess'' of the plot increases with the uncertainty. As the
variance increases, these grains dominate the plot, making the
visualisation harder to read, as it should be.

\begin{figure}[t]

\centering{

\includegraphics[width=1\textwidth,height=\textheight]{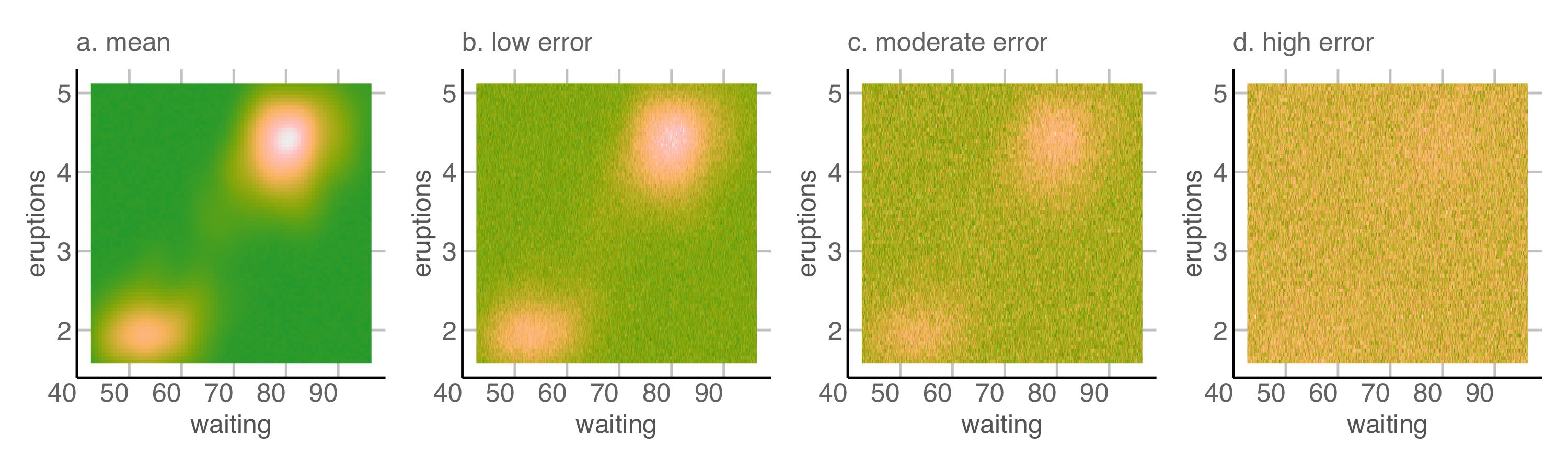}

}

\caption{\label{fig-meanprob}How uncertainty is handled (or not) in
raster displays of bivariate density. The axes show the eruption time vs
waiting time, and colour indicates density value, with lighter
indicating higher density. In plot (a), uncertainty is ignored by
showing only the estimate, and plots (b, c, d) show samples reflecting
different scales of uncertainty in the density estimate. We can see that
as the variance in the estimates increases, the visualisation of the
sample becomes harder to read and conveys more uncertainty.}

\end{figure}%

You may still want to visualise summary statistics alongside our
uncertainty representation. This is a common sentiment, and it is the
same desire that causes us to visualise a mean and confidence interval
on the same plot. Unfortunately, there is a reasonable amount of
evidence that visualising summary statistics alongside uncertainty
information causes that uncertainty information to be ignored
\citep{uncertchap2022}. Allowing you to include the point estimates that
allow you to ignore the noise, explicitly because the visualisation is
too noisy, defeats the entire purpose of the approach. If we are only
concerned with convergence to constant values (and not visual
convergence to other distributions), then we do not need to include
uncertainty at all and Theorem~\ref{thm-cmt} would be trivially
fulfilled by visualisations of the mean. Therefore, whichever
representation we choose, it needs to convey a complete view of this
distribution; a point estimate cannot do that.

\subsubsection{Why not probability
functions?}\label{why-not-probability-functions}

Disallowing point estimates doesn't actually limit our flexibility, as
we still have samples, quantiles, and probability functions at our
disposal. This is where the ``Mr Potato Head'' approach to
distributional statistics starts to cause problems as we bump up against
the orthogonality requirement that is built into the grammar of
graphics. Flexibility requires that \emph{every} deterministic graphic
should have an uncertain counterpart, even ones that have a pre-defined
statistic comprised of point estimates, such as a box plot or bar chart.
This is a fundamental requirement, otherwise the system will not be an
effective EDA tool. If we allow for any statistic, we will create a
mismatch where the values we are trying to feed into our statistic,
\(S\), are not on the same domain as expected by the function. To be
more explicit, our statistic is expecting values on the domain,
\(M(\Omega)\). If we define a new statistic, \(S^* = S \circ S_{dist}\),
where the range of \(S_{dist}\) is not \(M(\Omega)\), such as
\(P_{M(\Omega)}(M(\Omega))\), then we have produced invalid input for
the next stage of our visual function, \(S\). For example, if our
statistic is expecting heights that range from 150 to 200, we cannot
feed in a set of probabilities on {[}0,1{]} and expect there to be no
issues. The statistics that create a domain mismatch also tend to create
an implicit inference problem, as changing the statistic used to
represent the random variables can also change the role of uncertainty
in our analysis \citep{Mason2026}. This means that violating this rule
will not only result in unusable inputs or nonsensical outputs for
\(S\), it will also fundamentally change our visual function, \(V\).
This change means our visualisation will not adhere to
Theorem~\ref{thm-cmt}, which is the primary requirement of our system.
For these reasons, we can only allow statistical representations that
output a range that is equivalent to the input space.

\subsubsection{Why not quantiles?}\label{why-not-quantiles}

This leaves quantiles and samples as our remaining distribution
statistics. It makes sense that these methods would be the most
flexible, as a sample is just outcomes on \(M(\Omega)\), and quantiles
are just ordered samples. However, the notion of ``ordering'', which is
required for quantile representations, produces two problems for a
flexible visualisation system. The first problem is that quantiles
communicate an explicit ordering on \(\Omega\). While the data at this
stage is technically on the real line \(M(\Omega)\), the quantiles will
not have meaning if \(\Omega\) is unordered. Using quantiles to
visualise uncertain categorical data will either result in meaningless
graphics or an inability to visualise the data at all. This limitation
would prevent us from visualising uncertain categorical data, which is a
common output of classification models. The second problem with
quantiles is that they don't have a natural extension to multivariate
data. Quantiles are well-defined for univariate cases, but multivariate
spaces require several assumptions on the relative magnitude of our
variables, which are unlikely to always be correct.
Figure~\ref{fig-circle-line} visualises the four scenarios that arise
from passing a univariate or multivariate random variable, represented
as a quantile or a sample, to the slope or intercept of a
\texttt{geom\_abline}. This is not an unreasonable scenario, as a linear
regression with a random intercept and slope is a common topic even in
introductory statistics courses. We can see that in the univariate case,
where the intercept of the line is an \(N(0,1)\) distribution, the
information conveyed by the quantile (a) and the sample (b) is very
similar. This is because quantiles are well defined in the univariate
case. However, for the multivariate case, adding a second random
variable in the slope (such that we are now visualising a multivariate
normal distribution with marginal distributions \(N(0,1)\), and a
covariance of \(-0.8\)) throws our notion of ordering out the window. In
this case, quantiles are not well defined, as any quantile, \(q\), will
have an infinite number of (intercept, slope) pairs that could produce
that probability, and are better conveyed by a function (hence why
contour plots are typically used). As we cannot visualise the slope of a
line as a function, a sensible alternative might be to use the marginal
or equicoordinate quantiles \citep{Bornkamp2018}. We opted to use the
marginal quantiles, which are also used to colour the lines in both
representations, but the conclusions are the same if we use the
equicoordinate approach instead. This allows us to see the danger of
using quantiles as our distribution representation. The first issue is
that the notion of ordering we have imposed on the quantiles does not
translate in the multivariate case, which can be seen in the haphazard
colouring of the lines in the sample, which was not a problem in the
univariate case. The implicit pairing of values has also changed the
point of intersection of the lines, and the neatness of the quantiles
conveys more certainty in our conclusions than is warranted by the
actual data. Even if we tried to work around these problems by coming up
with some abstract definition of a visual quantile, we wouldn't be able
to draw the output with a straight line, which is the only real
requirement for \texttt{geom\_abline}.

\begin{figure}

\centering{

\includegraphics[width=1\textwidth,height=\textheight]{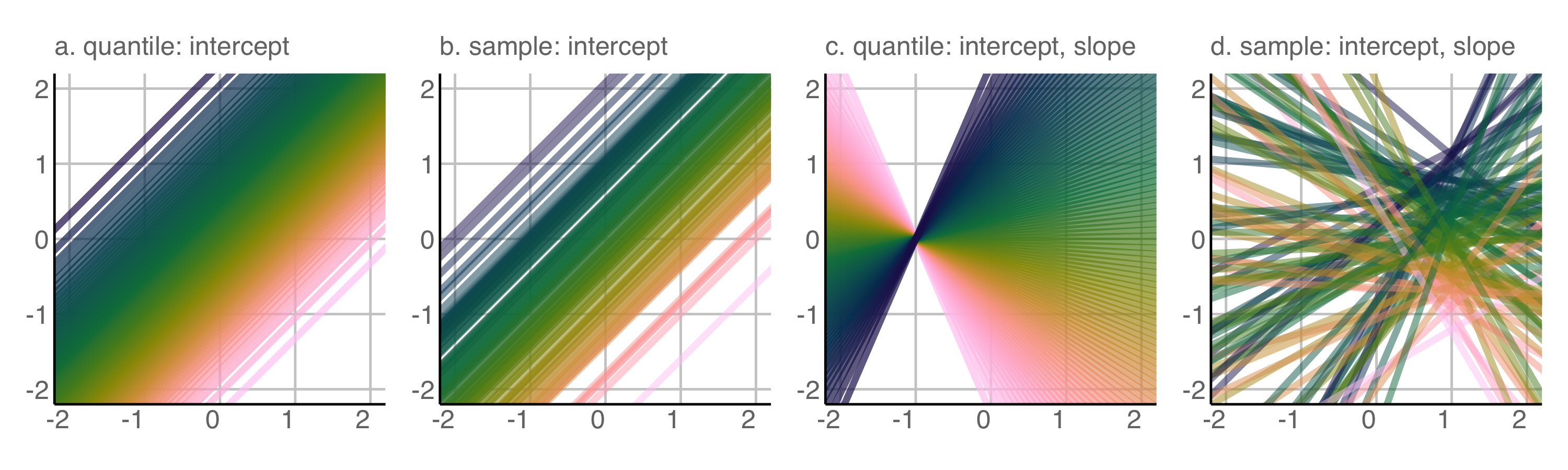}

}

\caption{\label{fig-circle-line}Why quantiles are problematic for
representing our distribution variables, using regression coefficients.
Intercepts and slopes were simulated using marginal distributions of
N(0,1) and a covariance of -0.8. Plots (a) and (b) have only the
intercept treated as random, and show the quantiles and samples,
respectively. Colour maps to quantile in (a) and to the value of the
intercept in (b): both plots convey similar information. It breaks down
when both slope and intercept are treated as random, shown as quantiles
(c) and samples (d). Colour is mapped to the same notion of distance
that is used to construct the quantiles. But distance is not well
defined, and we can see the approaches diverge in the erratic colouring
of the lines. Ordering beyond one variable makes quantiles inflexible
representations of distributions.}

\end{figure}%

\subsubsection{Distributions as samples}\label{distributions-as-samples}

This means that the only representation of a distribution that is
equally as flexible as a point prediction is a sample of outcomes. Of
course, we don't want a sample of individual points; we want a sample of
geometric objects. To get this, we need to pass the data through the
visual function in batches, where each batch represents an outcome of
the full random matrix. In practice, this translates to ``splitting'' on
the \texttt{drawID} in the Grammar of Graphics \citep{Leland2005}, which
involves modifying the \texttt{group} variable in \texttt{ggplot2}
\citep{ggplot2} to include the \texttt{drawID}.
Figure~\ref{fig-grouping-need} demonstrates the effect of grouping in
the implementation. Without appropriately handling the grouping in
\texttt{geom\_smooth}, the plot has only one fitted curve with a
standard error artificially small due to a larger number of
observations, which is wrong. Once the group variable is modified to
include the \texttt{drawID}, the result is a fitted line for each
sample, and also the choice to include a standard error ribbon faintly
in the background for each sample.

\begin{figure}

\centering{

\includegraphics[width=0.8\textwidth,height=\textheight]{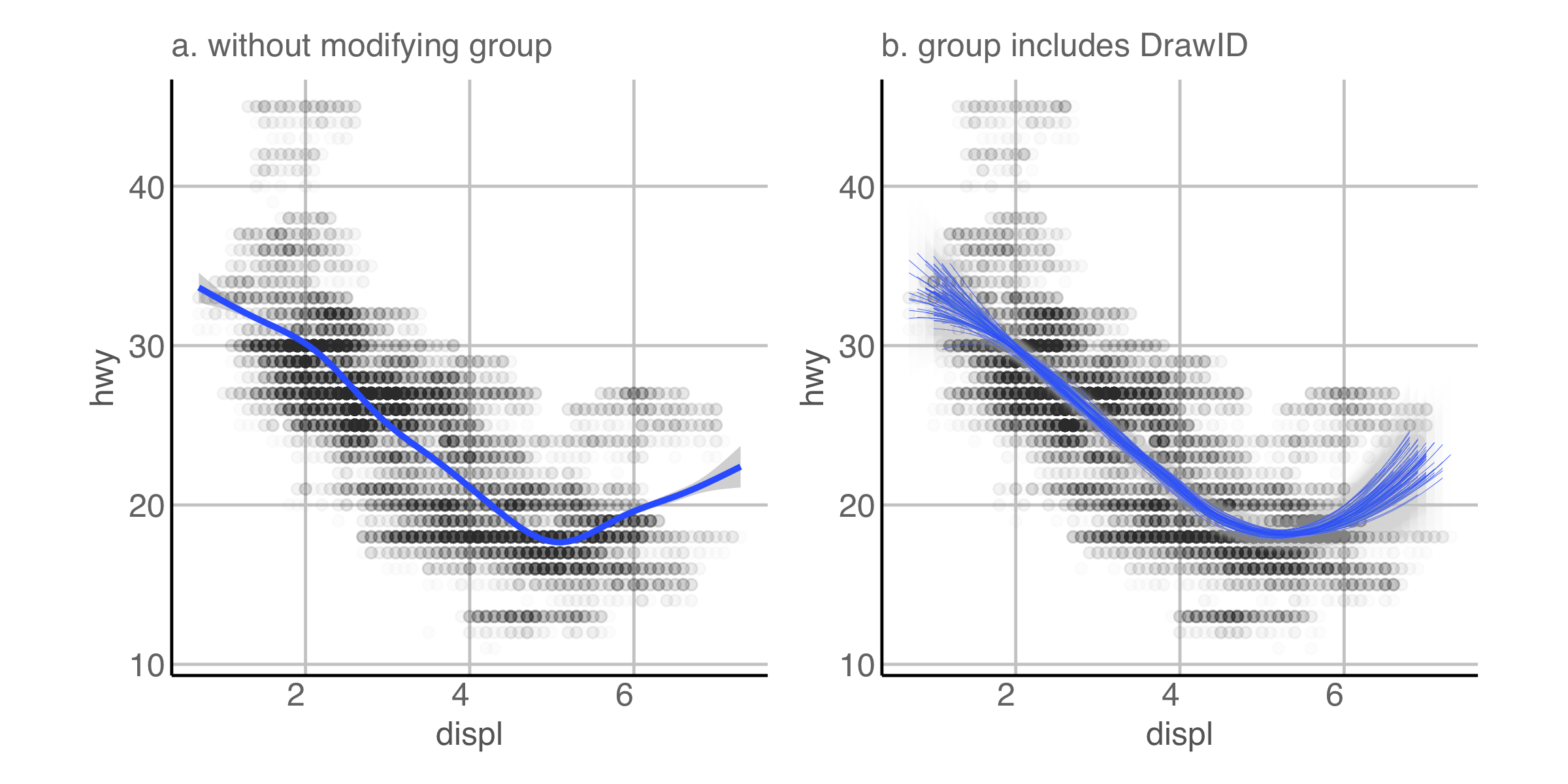}

}

\caption{\label{fig-grouping-need}Modifying the group variable is
essential for handling samples: (a) not done, giving an incorrect
representation of the uncertainty, (b) group variable includes the
\texttt{drawID}. We can see that we need to pass our samples through the
visual function in batches to ensure that the statistics are not
artificially changed by the sample size.}

\end{figure}%

The requirement for samples and \emph{only} samples as our distribution
representation is why the formalisation by \citet{Kay2023}, despite
having the insight to use distributional inputs, did not have the full
flexibility required for EDA. By allowing flexible distribution
representations, \texttt{ggdist} is focused on looking at distribution
as values in their own right, rather than integrating uncertainty into
existing visualisation systems. This is also how the visual functions of
\texttt{ggdist} and \texttt{ggdibbler} in Figure~\ref{fig-dist-example}
diverge from one another. It is important to understand that neither
approach is a subset of the other; they are orthogonal, and most of the
plots made in \texttt{ggdist} cannot be made using \texttt{ggdibbler}.
While there are instances where \texttt{ggdist} and \texttt{ggdibbler}
produce similar-looking plots, these plots cannot be made using the same
data or the same code. The distinction between the two approaches
translates directly from the philosophy of \citet{Mason2026}, who
pointed out that the difference between the role of signal and noise is
in our inferential statistics. By having the distribution statistic
subsume the statistic of the plot, we are changing our inferential
statistic and visualising uncertainty as a signal. This is why we
repeatedly say that \texttt{ggdist} is for looking at uncertainty as
signal, and \texttt{ggdibbler} is for looking at uncertainty as noise.
The visualisations of \texttt{ggdist} cannot be made by
\texttt{ggdibbler} due to the limitations in the distribution statistic,
and the visualisations of \texttt{ggdibbler} cannot be made by
\texttt{ggdist} due to the limitations in the \texttt{ggplot2} level
statistic. Ultimately, both packages are required if we want a complete
picture of uncertain data.

\subsection{\texorpdfstring{The adjustment to
\texttt{geometry}}{The adjustment to geometry}}\label{the-adjustment-to-geometry}

The geometry component of the grammar translates our data to a magnitude
in space \citep{Leland2005}. By displaying each distribution as a
sample, we have distilled uncertainty visualisation down to a simple
over-plotting problem; where we previously had a matrix, \(X_{n,m}\), we
now have an array, \(X_{n,m,t}\). Therefore, to pass our data through
the following stages of the grammar, we need to flatten our array back
into a matrix in such a way that we ensure each outcome from our random
matrix is equally weighted. Over-plotting is usually managed by the
geometry component of our visual function by using position adjustments
such as dodging to prevent overlap or transparency to make the
overlapping visible \citep{Cook2016, ggplot2, Leland2005}. Position
adjustments are an umbrella term used to describe any small changes to
the position of a geometric object, and are usually implemented to
ensure different groups are equally visible. Unlike statistics, we can
perform position adjustments one after the other, which means position
adjustment can be nested within each other. Therefore, we can make the
position adjustment required for the overplotting created by the samples
\emph{within} any position adjustment that already existed in the plot.
This is the final change we will make to our visual function to allow
the visualisation of random variables, and it is formalised in
Definition~\ref{def-geom}.

\begin{definition}[Generalised
geometry]\protect\hypertarget{def-geom}{}\label{def-geom}

Let \(G^*\): \(R\xrightarrow{} B\) be a geometry function that
transforms an array of data into a matrix of geometric positions. We can
further decompose \(G^*\) into \(G = E\circ P^*\), where
\(E:R\xrightarrow{} B\) is the geometry function, and
\(P^*:B\xrightarrow{} B\) is the position adjustment. Let \(C_{n,m,t}\),
\(B_{n,m,t}\), \(A_{n \times t,m}\) be matrices of geometric positions.
We can further decompose \(P^*\) into
\(P^* = P_{within} \circ P_{between}\), where

\begin{itemize}
\tightlist
\item
  \(P_{within}\) is a within sample position adjustment that transforms
  \(C_{n,m,t}\) to \(B_{n,m,t}\), by identifying overlapping points on
  \(C_{n,m,i}\) \(\forall i = 1,...,t\) and applying the specified
  sample position function, and.\\
\item
  \(P_{between}\) is a between sample position adjustment that
  transforms \(B_{n,m,t}\) to \(A_{n \times t,m}\) by flattening
  \(B_{n,m,t}\) into \(B_{n\times t,m}\), identifying overlapping points
  on \(B_{n\times t,m}\) and applying the specified position adjustment.
\end{itemize}

\end{definition}

We will spend the rest of this section detailing the nested position
system.

\subsection{Nested position
adjustments}\label{nested-position-adjustments}

Position adjustments change the location or size of a geometric object
to prevent overlap and ensure all geometric objects remain visible. They
achieve this by placing objects beside one another (dodging), stacking
them on top of each other (stacking), placing them in front of each
other and making them see through (transparency), or showing them one
after each other in quick succession (animations). These options make up
the four dimensions that we have available for position adjustments: x
(dodge), y (stack), z (transparency), and time (animation). Including
transparency and time as position adjustments is not the standard
approach in the literature, as these plots are typically considered
separate axes. This is not unique to uncertainty visualisation, even
\citet{Leland2005} did not discuss positions relative to the axis of
x-y-z-t, but rather specified position adjustments as being on the
measured scale (stack) or in the spare space (dodge). This is an
important distinction, but we choose to frame the positions in terms of
x-y-z-t to highlight that there may be multiple measured or spare axes
in a single plot.

Unlike the statistics, (most) position adjustments do not meaningfully
change the inferential statistic of our plot, so we can nest them freely
without concern. This is particularly useful because without nested
position adjustments, we would need to apply the same position
adjustments to the over-plotting caused by both the original grouping
and the sampling. Since we cannot use stacking on the measured axis for
our samples, this would prohibit us from making uncertain versions of
stacked bar charts, which, again, would be an undesirable limitation to
our uncertainty visualisation system. We can see this problem in
Figure~\ref{fig-positions}, which shows the visualisation of a stacked
bar chart of the \texttt{mpg} data (a), alongside a visualisation of its
random counterpart, the \texttt{uncertain\_mpg} data, visualised using a
``stack'' (b), ``stack\_dodge'' (c), and ``stack\_identity'' (d),
position adjustment. The fact that stacking is not a viable approach
should be obvious: the scale has been artificially inflated, and the
visualisation provides little to no information about our data. In the
case of a bar chart, stacking is aligned with our measurement axis (y),
which leads to it being a problematic adjustment, as our between-sample
position adjustment, \(P_{between}\), can only be implemented on an axis
representing spare space. In other words, stacking is only a viable
position adjustment when the sum of the stacked groups holds meaning
\citep{Leland2005}, which is not true for an arbitrary number of
samples. Interestingly, this split of appropriate versus inappropriate
position adjustments is opposite to the findings of
\citet{Bartonicek2025}, who found that stacking was the only appropriate
axis to use for interactivity, for similar arbitrary scale change
issues. This suggests the possibility of an underlying orthogonal
relationship between uncertainty and interactivity in statistical
graphics that would allow us to implement both interactivity and
uncertainty visualisation simultaneously.

\begin{figure}[t]

\centering{

\includegraphics[width=1\textwidth,height=\textheight]{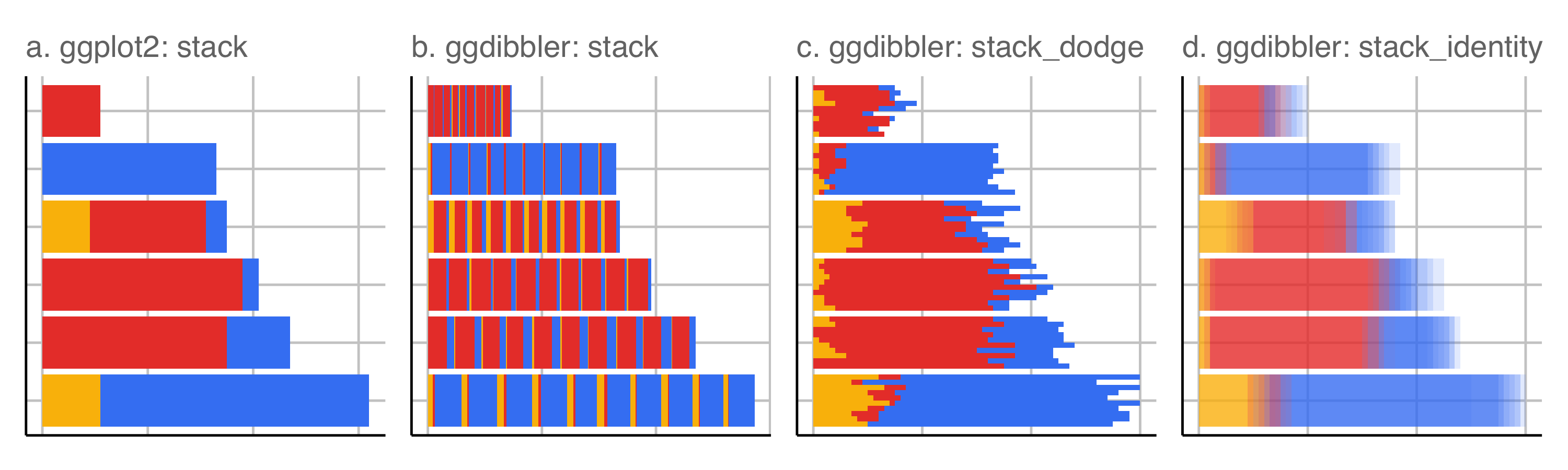}

}

\caption{\label{fig-positions}Why we need nested position adjustments
illustrated using stacked bar charts made using different position
adjustments. Plot (a) shows what a deterministic plot looks like for
reference, while plots (b), (c), and (d) use the same visual function,
but have a random variable input. We can see that stacking is not viable
as plot (b) is unreadable and does not maintain continuity, while
dodging (c) and transparency (d) work well. It is clear that we should
not use the measurement axis for our samples' position adjustment.}

\end{figure}%

Using this framework, we find that a lot of plots with distinct names
can actually be described by a single plot with different position
adjustments. For example, if we have a map with the fill of each area
represented by a random variable, then we could capture this uncertainty
using a HOPs \citep{Hullman2015}, a pixel map \citep{Vizumap}, or a
value-suppressing uncertainty palette \citep{Correll2018}. These plots
could all be made in \texttt{ggdibbler} using a
\texttt{geom\_sf\_sample} and an animation, subdivide (a simultaneous
dodge and stack), or transparency position adjustment, respectively. If
we consider a facet to be a ``between'' plot position adjustment, in
contrast to the ``within'' plot position adjustments we get with dodging
and transparency, we can extend this idea further. Under this framework,
the uncertainty visualisations that map a null distribution with an
alternative on the same plot
\citep{Guo2024, Hullman2021, Savvides2019, McNutt2020} are just the
lineup protocol \citep{Buja2009} without the ``between plot'' position
adjustment.

The most appropriate position adjustment to use can depend on which
aesthetic the random variable is mapped to, and impacts our ability to
read the plot. Figure~\ref{fig-rightposition} illustrates the change in
plot appearance when a random variable is mapped to text using a
transparency (a) and jitter (b), or mapped to colour using a dodge (c)
and transparency (d). We can see that transparency works quite well for
text, while position adjustments such as jitter make the overlapping
text harder to read, regardless of the uncertainty in the estimate. We
can see that colour works well with dodged positions, as it allows us to
see the full sample and do the final calculation visually. Managing
colour with transparency will still produce a technically correct plot,
but it can lead to colours that cannot be matched to the legend, as high
variance colours mix and create new colours that do not belong to the
palette. Differences in the most appropriate position adjustment can
cause conflict when there are multiple sources of uncertainty in a plot.
It would be interesting to investigate this further with a perceptual
experiment to test the effectiveness of different position adjustments
for different aesthetics, but that is well beyond the scope of this
paper.

\begin{figure}[t]

\centering{

\includegraphics[width=1\textwidth,height=\textheight]{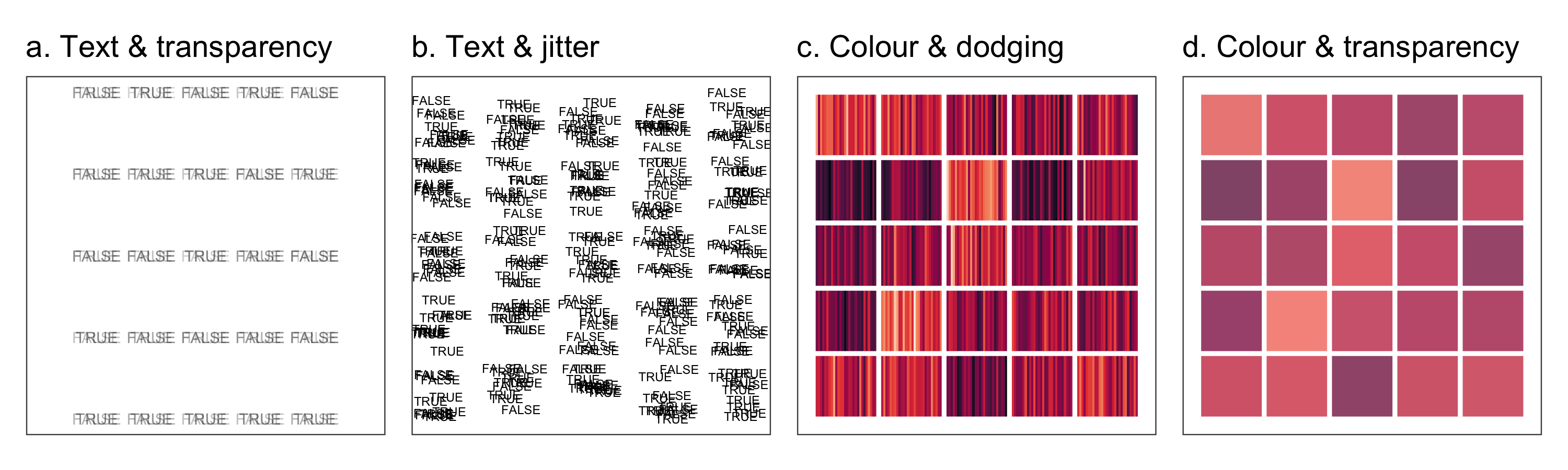}

}

\caption{\label{fig-rightposition}Illustration of the change in plot
appearance based on aesthetic mapping and position adjustment. Plots (a,
b) map the random variable to text with transparency and jitter,
respectively, and plots (c, d) map to tile colour using dodging and
transparency. Although this needs experimental evidence, mapping the
samples to transparency improves readability for text, but for colour,
dodging produces better readability than transparency.}

\end{figure}%

\subsection{The generalised visual
function}\label{the-generalised-visual-function}

If we combine the definitions from Definition~\ref{def-vfunc},
Definition~\ref{def-scale}, Definition~\ref{def-stat}, and
Definition~\ref{def-geom} we have a generalised visual function that
accepts random variable inputs. In this generalised visual function, the
deterministic graphics made by Definition~\ref{def-vfunc} are simply a
special case of Definition~\ref{def-vgeneral}, where every cell is a
degenerate distribution.

\begin{definition}[The generalised visual
function]\protect\hypertarget{def-vgeneral}{}\label{def-vgeneral}

Let \(\textbf{X}\) be a random matrix on the probability space
\((\Omega, \mathcal{F}, Pr)\). Let \(V\) be a function that maps
\(\textbf{X}\) from \((\Omega, \mathcal{F}, Pr)\) to the space of all
visual statistics, \(\Psi\). The visual function, \(V\), can be
decomposed into the following composite function:

\[
V = A \circ O \circ G^* \circ S^* \circ M
\]

\end{definition}

\section{\texorpdfstring{Implementation in
\texttt{ggdibbler}}{Implementation in ggdibbler}}\label{implementation-in-ggdibbler}

The visual function given by Definition~\ref{def-vgeneral} should allow
you to make an uncertainty visualisation version of any graphic that can
be described with the grammar of graphics. We have implemented this
theory in the R package, \texttt{ggdibbler}, which is a \texttt{ggplot2}
extension that allows users to create an uncertain version of any
\texttt{ggplot2} graphic. Figure~\ref{fig-illustration} illustrates this
flexibility by showing a collection of plots that were all made using
this generalised visual function. We can see that there is flexibility
in both plot type, as we display line plots, maps, pie charts,
histograms, bubble charts, and network diagrams, and in aesthetics, as
position, colour, size, slope, and other aesthetics all have a random
variable mapped to them. A single plot can even have multiple sources of
uncertainty simultaneously mapped to different aesthetics. By
establishing a set of rules that will almost always work, we save
ourselves from having to design bespoke software for every single
individual case.

\begin{figure}[t]

\centering{

\includegraphics[width=0.9\textwidth,height=\textheight]{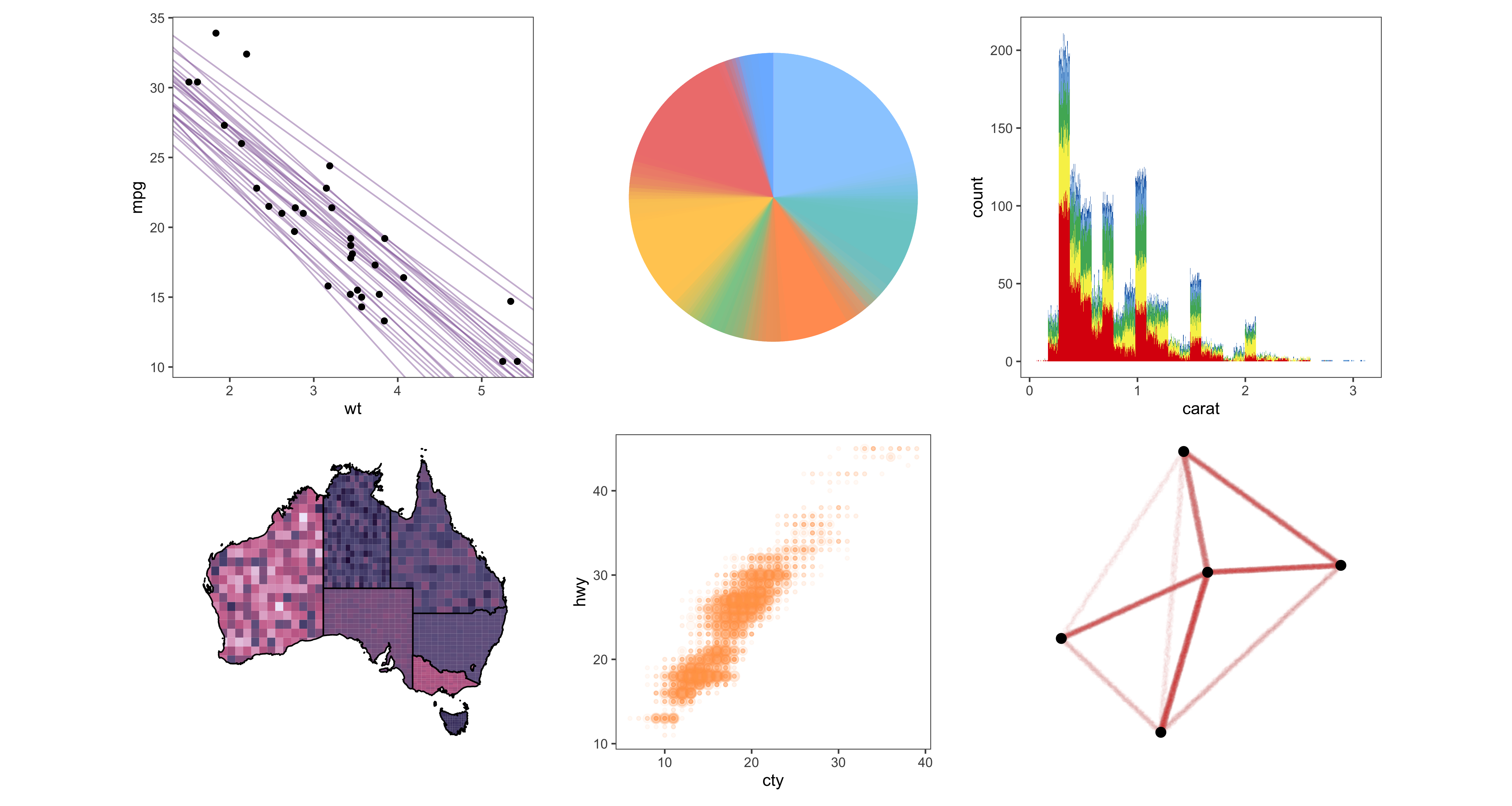}

}

\caption{\label{fig-illustration}This formalisation of uncertainty
visualisation offers extensive flexibility, illustrated by six plots:
line, pie chart, histogram, map, bubble chart, and network diagram.
These plots are made with almost identical syntax with
\texttt{ggdibbler} as that of the deterministic \texttt{ggplot2}
equivalent. These aesthetics - position, colour, size, slope - are all
mapped using random variables.}

\end{figure}%

While we have established the conceptual theory that would underpin a
flexible uncertainty visualisation system, there are considerations that
need to be made when implementing the theory as practical software.
Specifically, we should discuss the design of the user interface, the
data objects that allow us to work with random matrices, and the
computational complexity that comes with uncertain plots. This section
will detail these components.

\subsection{The software design}\label{the-software-design}

The way that a user interacts with a piece of software can help
communicate the mechanisms or theory that underpins it. The statistical
theory behind random matrices, visual convergence, and the continuous
mapping theorem that underpins the \texttt{ggdibbler} package might be
too complicated for someone trying to make a simple scatter plot, but a
basic understanding of these ideas is required to use the package
correctly. Therefore, when designing the functions, we opted to subtly
communicate these ideas through coding paradigms and function names.

Readers familiar with programming paradigms might look at
Definition~\ref{def-vfunc} and Theorem~\ref{thm-cmt} and immediately
think of object-oriented programming (OOP). These readers would be
right, \texttt{ggplot2} can be considered to be an object-oriented
system with a functional-feeling interface. Actually, the bulk of
functionality underlying R structures can be considered to be
object-oriented.

In OOP systems, data is stored as objects and methods that operate on
these objects. The user interacts with these objects only through the
methods, not by directly inspecting the elements. Objects can inherit,
so special objects have features that will work in a variety of
settings, and also some additional special features. Different objects
can respond to the same method call in different ways (polymorphisms).

Today's R contains several choices in data management: S3, S4, R6, and
the latest, S7. S3 forms the original framework, and an example of the
polymorphism is the \texttt{print()} function, where what is printed
will change depending on the object provided. For example, a
\texttt{data.frame} will be printed differently from an \texttt{lm}
(linear model object). It lacks the full characteristics of OOP, though,
because there are no formal class definitions, and it is easy to misuse.
S4 is stricter and underlies all of the Bioconductor
\citep{bioconductor}, but more cumbersome for the user. S7 is designed
to have the ease-of-use of S3 but the strict checks of S4. Of course,
this is not an exhaustive list, and there are several other choices
floating around for specific use-cases, including ggproto and R6.

The main reason to use OOP is polymorphism, which allows us to apply the
same function to different classes of input \citep{Wickham2019}. For
example, this paradigm would allow us to create a \texttt{draw} function
that accepts both \texttt{polygons} and \texttt{lines}, drawing either
one without any special input from the user. The \texttt{sf} package's
\texttt{geom\_sf} function allows users to indiscriminately pass points,
lines, and polygons to the function, which responds exactly as the user
expects, plotting geoms appropriate to the spatial class passed in. OOP
is the logical approach for uncertainty visualisation, as we want to
convey that \(V\) is the same function regardless of the input type.
From the perspective of the user, the function should be the same
whether we input \(\textbf{X}\) or \(X\).

Not only is OOP theoretically consistent with our goals, but it is also
the most practical approach. Using OOP allows us to hide unnecessary
complexity, which is particularly helpful for uncertainty visualisation,
as the abstract nature of uncertainty causes it to be quite error-prone.
Visualising uncertainty without the guardrails put up by
\texttt{ggdibbler} can lead to cases where uncertainty is visualised as
a separate variable, and the graphic does not have the statistical
properties guaranteed by the package. This approach will also make it
easier to perform EDA, as the most flexible programming systems make no
assumptions about our data; instead, they react to the object input by
the user to determine the initial mapping \citep{Leland2005}. This
principle is already carried through in \texttt{ggplot2}, where plots
automatically adapt to categorical, continuous, discrete, or date-time
inputs by having an OOP scale, \(M\). While it would be nice to
implement \texttt{ggdibbler} as a scale, an uncertainty visualisation
system requires more intensive changes to \(V\).

The implicit relationship between \texttt{ggdibbler} and
\texttt{ggplot2} is communicated through the syntax of the code.
Ideally, if all of \texttt{ggplot2} were built on an OOP system, we
could just create an ``uncertainty'' version of all the \texttt{ggplot2}
functions, allowing users to pass distributions without even noticing
the change in the underlying package. That is, both the \texttt{ggplot2}
and \texttt{ggdibbler} plots from Figure~\ref{fig-dist-example} would
use the syntax,
\texttt{ggplot(data\ =\ density\_data)\ +\ geom\_density(aes(x\ =\ x))}
where the visualisation software runs the \texttt{ggdibler} code if
\texttt{x} is a \texttt{distributional} object. As \texttt{ggplot2} is
not built on an OOP system, this is not possible, so instead,
\texttt{ggdibbler} adds a \texttt{*\_sample} suffix in the function
name. This allows us to maintain similar naming conventions to the
related \texttt{ggplot2} function, while also being explicit about what
the function does. For example, the code that makes the \texttt{ggplot2}
density is
\texttt{ggplot(data\ =\ density\_data)\ +\ geom\_density(aes(x\ =\ xmean))},
while the code that makes the \texttt{ggdibbler} density is
\texttt{ggplot(data\ =\ density\_data)\ +\ geom\_density\_sample(aes(x=xdist))}.
This syntax still conveys the idea that the visual function is
identical; it is only the input that has changed.

This strong theoretical foundation not only gives us an intuitive
function design, but it also allows us to have a lot of versatility
built on a shockingly simple code base. Despite the package covering the
full range of geoms in \texttt{ggplot2}, the bulk of \texttt{ggdibbler}
is a single function that does the sample and group adjustment, and a
second function that nests the positions (if needed). Every
\texttt{geom\_*} has a \texttt{geom\_*\_sample} counterpart that calls
these core functions, and then passes the data through the standard
\texttt{geom\_*} pipeline, with little to no bespoke adjustments for
individual geoms. A significant amount of the simplicity of the design
comes from the packages dependencies: \texttt{distributional}
\citep{distributional}, \texttt{ggplot2} \citep{ggplot2}, \texttt{dplyr}
\citep{dplyr}, \texttt{rlang} \citep{rlang}, \texttt{lifecycle}
\citep{lifecycle}, \texttt{scales} \citep{scales}, \texttt{tidyr}
\citep{tidyr}, \texttt{tibble} \citep{tibble}, \texttt{cli} \citep{cli},
and \texttt{sf} \citep{sfpack}.

\subsection{\texorpdfstring{Representing uncertainty using
\texttt{distributional}}{Representing uncertainty using distributional}}\label{representing-uncertainty-using-distributional}

The theoretical framework we have presented assumes you already have a
random matrix input, and thus far, we have somewhat ignored how you
would go about making one. This is because quantifying uncertainty and
representing it as a data object is fundamentally part of the data
manipulation stage of our analysis, so it should be kept as separate as
possible from the visualisation stage \citep{ggplot2}. This gives users
full transparency in \emph{what} precisely they are visualising
\citep{ggplot2}. In \texttt{ggdibbler}, this separation is created by
leveraging the \texttt{distributional} package \citep{distributional},
which allows users to store distributions inside data frames as
individual cells. \texttt{ggdibbler} works from the assumption that
users have already quantified the uncertainty as distributions before
attempting to visualise it, building a system that allows for
\texttt{distributional} inputs.

The \texttt{ggdibbler} package is designed to accept any
\texttt{distributional} input. There are two ways you can define a
distribution in \texttt{distribtional}:

\begin{enumerate}
\def\labelenumi{\arabic{enumi}.}
\tightlist
\item
  A theoretical distribution defined by the distribution and its
  parameters, or
\item
  An empirical distribution defined by a set of samples
\end{enumerate}

Most types of uncertainty can be represented as one of the two cases, so
the software is surprisingly flexible. The purpose of the distribution
is to accommodate classical statistics, or even Bayesian thinking, where
the distribution is known or specified. The reason for being able to
specify the distribution through samples is to accommodate the common
situation today that the distribution may not be known, but we can
describe it with samples. Such a situation might arise when we have
bootstrap samples describing the variability.

The \texttt{ggdibbler} software implements a full uncertain replication
of \texttt{ggplot2}, including the examples. This requires
\texttt{ggdibbler} to have uncertain versions of all the data sets used
in the\texttt{ggplot2} examples, made in distributional, including (but
not limited to) \texttt{uncertain\_mpg}, \texttt{uncertain\_faithful},
and \texttt{uncertain\_economics}. These examples make use of both the
theoretical and empirical distribution types. This allows users to
immediately start using \texttt{ggdibbler} with familiar data and plots,
making the system less daunting to integrate into an existing data
analysis pipeline.

\subsection{Additional computational
complexity}\label{additional-computational-complexity}

Replacing a single plot with a large sample of plots can significantly
increase the computational complexity of the visualisation. The more
random variables we feed into the plot, the bigger the sample size needs
to be, but the bigger the sample size gets, the more computationally
expensive the plots become to make and render. These are simply
manifestations of the classic statistics trade-off between computational
cost and accuracy, and the curse of dimensionality.

In \texttt{ggdibbler}, this sample size is controlled by the
\texttt{times} argument, which is set to \texttt{10} by default.
Sometimes this is appropriate, sometimes it is not; it depends on the
variance of the distributions, the particular plot type, the number of
random variables fed in, etc. We cannot set a reliable and sensible
\texttt{times} argument that works for every plot in every situation, so
instead we advise you to pick a sample size that allows your plot to
visually converge. This is a different type of convergence from the
convergence to a deterministic \texttt{ggplot2} discussed earlier.
Technically, a \texttt{ggdibbler} visualisation is a random variable, so
every time you print one, it will draw a new random sample and look
slightly different from the previous renderings. If your sample size is
big enough, the variability between each visualisation should be small
enough that your conclusions do not change between renderings. We could
take this a little further and suggest that your sample size should be
large enough that different renderings are visually indistinguishable
from one another. In other words, they have visually converged by
Definition~\ref{def-converge}. To be more specific, let
\(V(\textbf{X})_i\) be the \(i^{th}\) rendering of a \texttt{ggdibbler}
plot, then we would say that the plot has converged (and our sample size
is large enough) when
\(d(V(\textbf{X})_i, V(\textbf{X})_j)=0 \forall i,j\). This process of
visual convergence is shown in Figure~\ref{fig-correct-times}. We can
see that the full shape of the distribution becomes more visible as the
\texttt{times} argument increases (the input is a scatter plot of
uniform distributions).

\begin{figure}[t]

\centering{

\includegraphics[width=0.9\textwidth,height=\textheight]{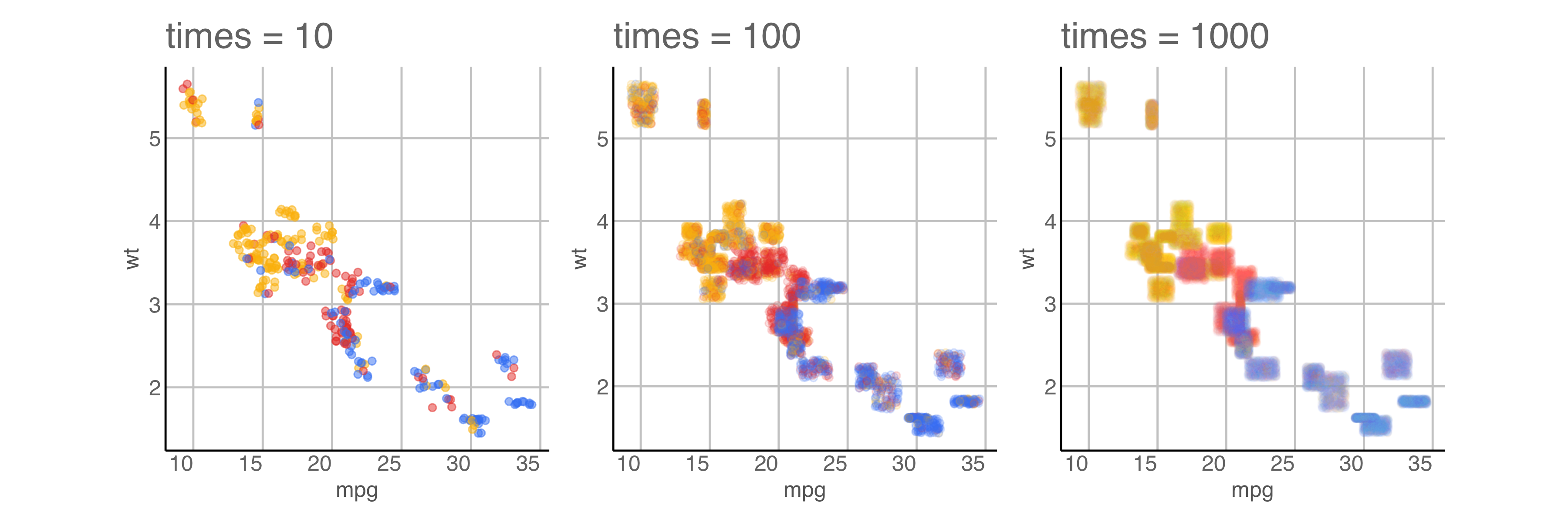}

}

\caption{\label{fig-correct-times}A scatter plot of a random matrix
version of the \texttt{mtcars} data from the `datasets' R package, with
the aesthetic mapping \texttt{x=mpg}, \texttt{y=wt} and
\texttt{colour=cyl}. All three variables are random variables. This plot
shows the impact of an appropriately chosen times argument. We can see
that as the sample size increases, the distributions form cohesive units
and stop looking like a collection of separate points with little
connection. This is not always achievable due to computational cost, but
we should, at the very least, select a sample size that means our
conclusions are not changing between renderings of the plot.}

\end{figure}%

Given this additional complexity, we might opt to skip the entire
sampling procedure and instead just map the ``uncertainty'' to some kind
of aesthetic. After all, despite only using samples to depict
uncertainty, the \texttt{ggdibbler} documentation (along with this
paper) is littered with plots that appear to be using the explicit
mappings that are commonly associated with uncertainty. Blur, fuzziness,
colour lightness, and shape/size frequently appear despite never being
controlled through the aesthetics. Therefore, it might be more
computationally effective to just control these elements directly.

While this idea is not unreasonable and might be possible to implement
in the future, we do not take this approach in \texttt{ggdibbler}, as
directly mapping uncertainty seems to be fraught with danger. The source
of the uncertainty and its visual expression are tightly linked in a
\texttt{ggdibbler} plot. Take, for example, the appearance of blurring
and fuzziness shows in Figure~\ref{fig-fuzzy-blur}, which depicts
\texttt{ggdibbler} fuzziness in a bubble chart, \texttt{ggdibbler} blur
in a dotplot, and \texttt{ggdist} blur in a dotplot. Both blur and
fuzziness use a transparent position adjustment, but blur is created
through uncertainty in position, while fuzziness is created through
uncertainty in size. This allows us to convey multiple types of
uncertainty at once, as blur makes it more difficult to read the
position but does not affect our ability to read the size, and vice
versa for fuzziness. The inability to convey multiple types of
uncertainty is the most common difficulty faced by uncertainty
visualisation approaches
\citep{geointerviews, Hadjimichael2024, MacEachren2005}. The blur and
fuzziness in the \texttt{ggdibbler} plots is more of an ``emergent''
aesthetic, which appears through the combination of transparency and
randomness; this is in direct contrast to the \texttt{ggdist} blur,
which is controlled top-down. This type of plot is rather unusual for
the \texttt{ggdist} package, as the uncertainty is controlled by a
standard error, rather than a distributional input, and that error is
mapped to a ``blur'' aesthetic. Looking closely at the plot, we can see
a blurred cliff effect, where the blurring of some dots extends over the
dots beneath them. Since the dots in a dotplot must be stacked on top of
each other, this type of blurring would not be possible. This makes it
unclear as to how the blur should be interpreted, and it indicates some
kind of breakdown in the relationship between the data and its visual
representation.

\begin{figure}

\centering{

\includegraphics[width=1\textwidth,height=\textheight]{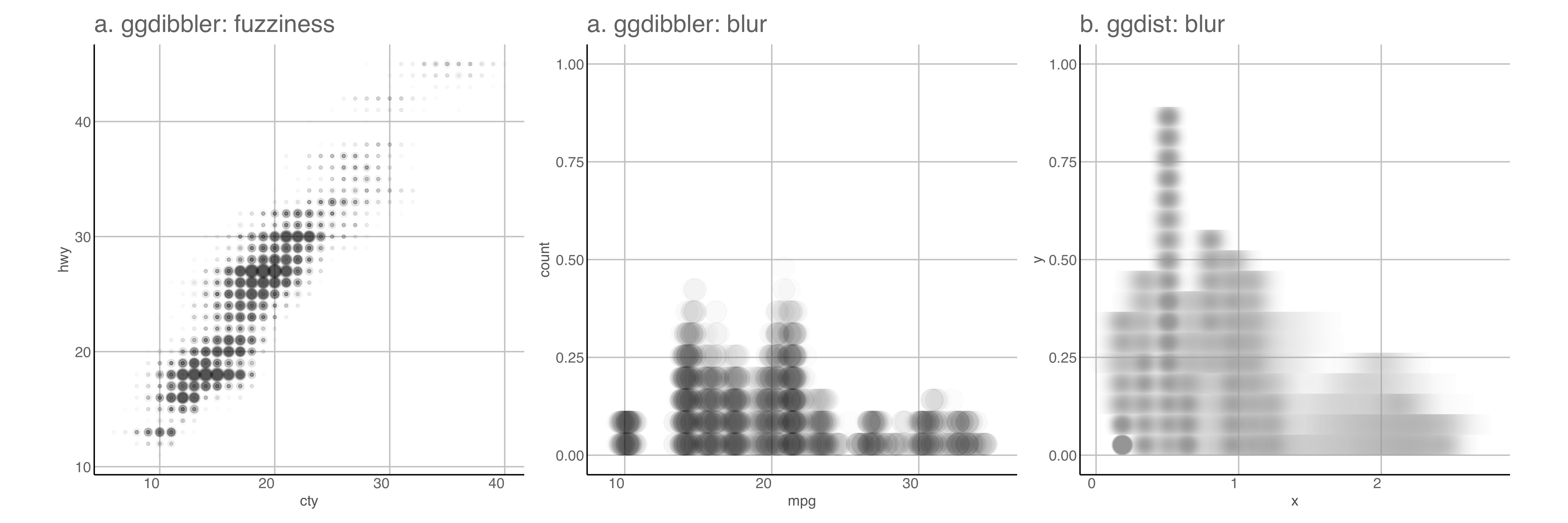}

}

\caption{\label{fig-fuzzy-blur}Two examples from the \texttt{ggdibbler}
documentation, and one example from the \texttt{ggdist} documentation to
illustrate the difference in the top-down versus emergent aesthetic
approach. The blur and fuzziness emerge from the \texttt{ggdibbler}
plots due to the sampling procedures, while the blur in \texttt{ggdist}
is added manually as a top-down aesthetic. We can see that the `cliff'
effect in the \texttt{ggdist} plot is not visible in the blurred
\texttt{ggdibbler} plot, because it would be impossible to generate that
appearance from the underlying data.}

\end{figure}%

Breaking the connection between the data and its visual representation
would result in us losing the desirable statistical properties that are
guaranteed by \texttt{ggdibbler}. This breakdown appears to be quite
common when we try to manually map uncertainty to an aesthetic
\citep{Mason2026}. Additionally, a flexible EDA system should allow
\emph{any} combination of \emph{any} uncertain aesthetics, and working
out how these combinations of random variables should appear would be
incredibly laborious, if not impossible. Additionally, this approach
would almost defeat the purpose of the system, as the whole point of
visualising the data to begin with is that we \emph{don't already know}
what it should look like. For these reasons, trying to directly map
uncertainty to an aesthetic might be better computationally, but would
likely result in more headaches than it would be worth, if it were
possible at all.

\section{Conclusions and future
research}\label{conclusions-and-future-research}

This paper set out to formalise uncertainty visualisation in order to
design a flexible uncertainty visualisation system that maintains nice
statistical properties. The value of this formalisation is not only in
the power of a truly flexible system, but also in the mathematical
foundation itself that ensures the connection between data and visual
aesthetic is always maintained.

By defining the visual function mathematically, we leveraged the concept
of continuity to evaluate the behaviour of visual statistics. This
approach uncovers a wealth of other statistical concepts we could
translate to statistical graphics. Building on this work, we should
investigate other concepts like bias/variance trade-off, statistical
sufficiency, and other convergence properties for visual statistics. In
this vein, we should also identify the minimum level of variance in a
plot that is required for the human perception of visual convergence,
related to the concept of ``just noticeable differences'' in
psychophysics. Similarly, it would be useful to know if there are
principles for determining the sample size required for visual
convergence, as it would give us a convenient ceiling on the
computational complexity of the plots.

On the topic of convergent graphics, it would be interesting to find out
if all plots with the same limiting visual statistic have some
underlying computation in common. This curiosity is quite similar to the
question posed by \citet{ggplot2}, where they asked if plots that are
visually identical, that are made using different grammar adjustments,
have some underlying principle in common. When looking into interactive
visualisations \citet{Bartonicek2025} also found underlying links
between the mathematical functions we plot and the appropriate visual
representations, so we wonder if there is a more ``fundamental'' version
of the grammar of graphics that can make this link more explicit. In
mathematics, all functions can essentially be broken down into a basic
increment/step function (something akin to stacking in graphics), so we
do wonder if a similar principle can be applied to plots.

The motivation for a more fundamental building block of statistical
graphics can also be found in our brief discussion on emergent
aesthetics. The emergent aesthetics blur the line between the statistic
and geometric stages of plot building, begging the question ``Should we
explicitly map a statistic, or allow it to emerge through the
visualisation process?'' To solve this problem, we would need to start
by untangling which aesthetics are primary aesthetics and which are
emergent aesthetics. We would also need to understand the conditions
under which each of these aesthetics arises and the position adjustments
that lead to them.

Position adjustments have historically been an afterthought when
building graphics, but this work motivates a more thorough investigation
into how they affect the readability of a plot. We discussed how
text/shape works well with alpha, and dodging/subdividing is good for
colour, but this is just an anecdotal observation. A more formal theory
on how position adjustments change our ability to read a plot is much
needed.

Finally, the \texttt{ggdibbler} software has room for improvement.
Currently, the software only accepts individual distributions and, if
multiple distributions are passed, they are assumed to be independent.
The \texttt{distributional} software allows for joint distributions,
which are functionally a random matrix collapsed into a vector.
Expanding \texttt{ggdibbler} to accept these joint distributions is a
natural extension for the software. Additionally, \texttt{ggdibbler}
does not work with \texttt{ggplot2} extensions due to the way the
software is implemented. We plan to extend the package to make this
possible. A full list of the planned changes is available, along with
the package source code in the \texttt{ggdibbler} GitHub.

\section{Acknowledgements}\label{acknowledgements}

The first author of this paper is supported in part by a scholarship
from the the Australian Energy Market Operator. This research was
supported by the Commonwealth through an Australian Government Research
Training Program Scholarship {[}DOI:
https://doi.org/10.82133/C42F-K220{]}. The first author would also like
to thank Mitchell O'Hara-Wild and Cynthia Huang for their comments and
feedback which substantially improved the work, as well as Ze-Yu Zhong
for several interesting examples that ended up being used in this paper.
The R packages used for this work were: \texttt{tidyverse}
\citep{tidyverse}, \texttt{distributional} \citep{distributional},
\texttt{ggdist} \citep{Kay2023}, \texttt{ggdibbler} \citep{ggdibbler},
\texttt{patchwork} \citep{patchwork}, \texttt{khroma} \citep{khroma},
\texttt{tidygraph} \citep{tidygraph}, \texttt{colourspace}
\citep{colorspace}, \texttt{ggraph} \citep{ggraph}, \texttt{ozmaps}
\citep{ozmaps}, \texttt{sf} \citep{sfpack}, and \texttt{ggthemes}
\citep{ggthemes}. The GitHub repository for this paper can be found at
https://github.com/harriet-mason/paper-ggdibbler, which contains the
files required to reproduce this article in full.

  \bibliography{paper.bib}

\end{document}